\title{Sparse Solutions to Nonnegative Linear Systems and Applications}
\author[1]{Aditya Bhaskara\thanks{Email: \textsf{bhaskaraaditya@google.com}}}
\author[2]{Ananda Theertha Suresh\thanks{Email: \textsf{asuresh@ucsd.edu}}}
\author[1]{Morteza Zadimoghaddam\thanks{Email: \textsf{zadim@google.com}}}
\affil[1]{Google Research NYC}
\affil[2]{University of California, San Diego}
\date{}
\newcommand{\norm}[1]{\left \lVert #1\right\rVert }
\newcommand{\rr}{\mathbb{R}}
\newcommand{\rplus}{\rr_{+}}
\newcommand{\eps}{\epsilon}
\newcommand{\E}{\mathbb{E}}
\newcommand{\xopt}{x^{*}}
\newcommand{\xalg}{x_{\text{alg}}}
\newcommand{\su}[1]{^{(#1)}}
\newcommand{\ytil}{\widetilde{y}}
\newcommand{\ones}{\mathbf{1}}
\newcommand{\iprod}[1]{\langle #1 \rangle}
\newcommand{\calN}{\mathcal{N}}
\newcommand{\tr}[1]{\text{Tr}\left( #1 \right)}
\newtheorem{theorem}{Theorem}[section]
\newtheorem{lemma}[theorem]{Lemma}
\newtheorem{defn}[theorem]{Definition}
\newtheorem{remark}[theorem]{Remark}
\newcommand{\tcO}{\tilde{\mathcal{O}}}
\newcommand{\cO}{\mathcal{O}}
\newcommand{\cS}{\mathcal{S}}
\newcommand{\cI}{\mathcal{I}}
\newcommand{\mupi}{\mups r}
\newcommand{\backskip}{\vspace{-0.05in}}
\newcommand{\bp}{{p}}
\newcommand{\bx}{{x}}
\newcommand{\by}{{y}}
\newcommand{\bv}{{v}}
\newcommand{\bfvec}{{f}}
\newcommand{\mup}{\mu}
\newcommand{\mups}[1]{\mu_{#1}}
\newcommand{\sigp}{{\sigma}}
\newcommand{\sigps}[1]{{\sigma}_{#1}}
\newcommand{\sigpi}{\sigps r}
\newcommand{\wi}{w_r}
\newcommand{\wj}{w_r}
\begin{document}
\maketitle

\vspace{-.3in}
\begin{abstract}
We give an efficient algorithm for finding sparse approximate solutions to linear systems of 
equations with nonnegative coefficients. Unlike most known results for sparse recovery, we do not require {\em any}
assumption on the matrix other than non-negativity. Our algorithm is combinatorial in nature, 
inspired by techniques for the {\em set cover} problem, as well as the multiplicative weight update method. 

We then present a natural application to learning mixture models in the PAC framework.
For learning a mixture of $k$ axis-aligned Gaussians in $d$ dimensions, we give an algorithm that 
outputs a mixture of $O(k/\epsilon^3)$ Gaussians that is $\epsilon$-close in statistical distance 
to the true distribution, without any separation assumptions. The time and sample complexity is 
roughly $O(kd/\epsilon^3)^{d}$. This is polynomial when $d$ is constant -- precisely the regime
in which known methods fail to identify the components efficiently.

Given that non-negativity is a natural assumption, we believe that our result may find use in other settings in which we wish to approximately {\em explain} data using a small number of a (large) candidate set of components.
\end{abstract}

\section{Introduction}\label{sec:intro}
{\em Sparse recovery}, or the problem of finding sparse
solutions (i.e., solutions with a few non-zero entries) to linear systems of equations, is a fundamental problem in 
signal processing, 
machine learning and theoretical computer science. In its simplest form, 
the goal is to find a solution to a given system of equations $Ax = b$ that minimizes $\norm{x}_0$ 
(which we call the {\em sparsity} of $x$).

It is known that sparse recovery is NP hard in general. 
It is related to the question of finding if a set of points in $d$-dimensional space are in {\em general position} --
i.e., they do not lie in any $(d-1)$ dimensional subspace~\citep{Khachiyan}.
A strong negative result in the same vein is due to~\cite{Arora93} and (independently)~\cite{Kann93},
who prove that it is not possible to approximate the quantity 
$\min \{ \norm{x}_0 : Ax = b\}$ to a factor better than $2^{(\log n)^{1/2}}$ unless NP 
has quasi polynomial time algorithms. 

While these negative results seem forbidding,
there are some instances in which sparse recovery is possible. 
Sparse recovery is a basic problem in the field of compressed sensing, and in a beautiful line of work, 
\cite{CandesTao, Donoho} and others show that convex relaxations can be used for sparse recovery when 
the matrix $A$ has certain structural properties, such as {\em incoherence},
or the so-called restricted isometry property (RIP). However the focus in compressed sensing is to  {\em design} matrices $A$ (with as few rows or `measurements' as possible) 
that allow the recovery of sparse vectors $x$ given $Ax$. 
Our focus is instead on solving the sparse recovery problem for a {\em given} $A, b$, 
similar to that of \citep{Natarajan95,DonohoM03}. 
In general, checking if a given $A$ possesses the RIP is a hard problem~\citep{RIPCertify}. 

Motivated by the problem of PAC learning mixture models (see below), we consider the sparse recovery problem when the matrix $A$, the vector $b$, and the solution we seek all have {\em non-negative entries}. In this case, we prove that {\em approximate} sparse recovery is always possible, with some loss in the sparsity. We obtain the following trade-off:
\begin{theorem}\label{thm:main-informal}(Informal)
Suppose the matrix $A$ and vector $b$ have non-negative entries, and suppose there exists a $k$-sparse\footnote{I.e., has at most $k$ nonzero entries.} non-negative $\xopt$ such that $A \xopt = b$. Then for any $\eps >0$, there is an efficient algorithm that produces an $\xalg$ that is $O(k/\eps^3)$ sparse, and satisfies $\norm{ A \xalg - b}_1 \le \eps \norm{b}_1$.
\end{theorem}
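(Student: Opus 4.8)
The plan is to recast the statement as a geometric fact about mixtures of probability distributions and then attack it by a greedy, set‑cover‑style procedure augmented with multiplicative weights. \emph{Normalization.} Rescale so that $\norm{b}_1 = 1$ (if $b=0$ take $\xalg=0$) and discard every column with $A_j = 0$; writing $c_j = \norm{A_j}_1 > 0$ and $q_j = A_j / c_j$, each $q_j$ is a probability distribution on the $m$ coordinates. Because all entries are nonnegative, $A\xopt = b$ becomes $b = \sum_j w_j q_j$ with $w_j := c_j \xopt_j \ge 0$ and $\sum_j w_j = \sum_i (A\xopt)_i = \norm{b}_1 = 1$, at most $k$ of the $w_j$ being nonzero; thus $b$ is a convex combination of $k$ of the $q_j$. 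Conversely, any $y$ in the simplex $\Delta_n$ with $\norm{Qy - b}_1 \le \eps$ gives back, via $x_j = y_j/c_j$, an $\xalg$ with the same support and $\norm{A\xalg - b}_1 \le \eps\norm{b}_1$. So it suffices to find an $O(k/\eps^3)$‑sparse $y \in \Delta_n$ whose image under $Q$ is $\eps$‑close to $b$ in $\ell_1$.

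\emph{The greedy scheme.} For $y \in \Delta_n$ both $Qy$ and $b$ are distributions, so $\norm{Qy - b}_1 = 2 D(y)$ where $D(y) := \sum_i (b_i - (Qy)_i)_+$ is the ``uncovered mass'', and it is enough to drive $D$ below $\eps/2$. The function $D$ is convex and piecewise linear on $\Delta_n$ with minimum value $0$, attained at the $k$‑sparse point $y = w$. I would build $\xalg$ one column at a time, as in greedy set cover: maintain $\hat p^{(t)} = Qy^{(t)}$; form a weighting $u^{(t)}$ on the coordinates concentrated on the still‑uncovered set $\{i : b_i > \hat p^{(t)}_i\}$ — the multiplicative‑weights ingredient — pick the column $j_t$ maximizing $\iprod{u^{(t)}, q_{j_t}}$, and update by moving a small step from $y^{(t)}$ toward $e_{j_t}$, which adds at most one index to the support. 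Progress is always available because $b$ is itself a convex combination of columns: $\iprod{u^{(t)}, b} = \sum_j w_j \iprod{u^{(t)}, q_j} \le \max_j \iprod{u^{(t)}, q_j}$, so the chosen column covers at least as much $u^{(t)}$‑weighted mass as the current deficit $D(y^{(t)})$. Bounding the per‑step decrease of $D$ and summing over rounds then bounds the sparsity.

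\emph{The obstacle and the $\eps^3$.} The difficulty is that $D$ is not smooth, so a naive greedy step need not converge below a constant error: the error made in replacing $D$ by its linearization is only first‑order small, and is contributed by the ``nearly‑covered'' coordinates (those with $b_i \approx \hat p^{(t)}_i$), of which there can be many — this is precisely the phenomenon that makes $\ell_1$ recovery qualitatively harder than $\ell_2$, and indeed plain greedy ``packing'' (repeatedly adding the largest multiple of a column that does not overshoot $b$) can get permanently stuck at a constant error on small examples. The remedy — the technical core of the proof — is to combine the reweighting with a \emph{scale separation} of the coordinates of $b$: bucket them by dyadic magnitude, restrict to the $\mathrm{poly}(1/\eps)$ scales that carry non‑negligible mass (losing only $O(\eps)$), and process each scale in its own phase, so that within a phase no column overshoots the target by more than an $\eps$‑fraction of that phase's scale. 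With the overshoot controlled this way, each greedy step inside a phase can be made to buy a guaranteed amount of additive progress, and bookkeeping the number of phases against the columns spent per phase is what produces the extra $1/\eps$ beyond the ``ideal'' $\tilde O(k)$ — i.e., the final $O(k/\eps^3)$. Finally I would dispatch the routine points: each round selects a best column in $O(\mathrm{nnz}(A))$ time and the weights are cheap to maintain, so the algorithm runs in time polynomial in $n$, $m$, $k$, and $1/\eps$. I expect the hardest step to be making the ``bounded overshoot $\Rightarrow$ guaranteed progress'' estimate quantitative, since that is exactly where the non‑smoothness of $D$ has to be tamed.
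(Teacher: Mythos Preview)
Your normalization and geometric reduction (writing $b$ as a convex combination of the unit-$\ell_1$-normalized columns) are exactly right and match the paper. But the technical core of your plan is not yet a proof: the dyadic-bucketing remedy is too vague to do the work you assign it. Bucketing the coordinates of $b$ by magnitude does \emph{not} give you control over how a column $q_j$ interacts with each bucket---a single column can have large entries in many buckets at once, so ``processing each scale in its own phase'' cannot be done independently (the same $y$ hits all coordinates). In particular, the assertion that ``within a phase no column overshoots the target by more than an $\eps$-fraction of that phase's scale'' is unsupported: overshoot is governed by the ratios $(Qy)_i/b_i$, not by the scale of $b_i$ alone, and nothing in the bucketing scheme bounds those ratios. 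You also keep the iterates on the simplex and track the piecewise-linear deficit $D$; as you yourself note, $D$ is non-smooth and plain greedy can stall, and the phase idea does not repair this. You have correctly located the difficulty but not how to overcome it.

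The paper sidesteps both issues with a single smooth potential and by \emph{not} staying on the simplex. It starts at $x^{(0)}=0$, at each step adds $\theta e_i$ with $\theta \ge 1/(Ck)$ (so $\norm{x^{(t)}}_1$ grows), and only normalizes at the end. The quantity it controls is
\[
\Phi(x) \;=\; \sum_j b_j\,(1+\delta)^{(Ax)_j/b_j},
\]
together with $\psi(x)=\norm{Ax}_1$. The ratio $(Ax)_j/b_j$ in the exponent is exactly the built-in scale normalization you were reaching for---no bucketing is needed. The two key lemmas are: (i) if $\Phi(x)\le (1+\delta)^{(1+\eta)\psi(x)}$ then $\norm{Ax/\psi(x)-b}_1 \le 2\bigl(\eta + 1/(\delta\psi(x))\bigr)$, which converts control of $\Phi$ relative to $\psi$ into an $\ell_1$ bound; and (ii) the existence of a $k$-sparse exact solution guarantees, via sub-additivity of $t\mapsto (1+\delta)^t-1$ and an averaging argument over the $k$ support coordinates of $\xopt$, a choice of $i$ and $\theta\ge 1/(Ck)$ with $\Phi(x+\theta e_i)\le (1+\delta)^{\theta(1+O(\delta+1/C))}\Phi(x)$. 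Running for $T=Ck/\delta^2$ steps with $C=\Theta(1/\eps)$ and $\delta=\Theta(\eps)$ makes $\psi\ge 1/\delta^2$ while $\Phi$ stays below $(1+\delta)^{(1+O(\eps))\psi}$, and (i) finishes. The $k/\eps^3$ arises as $T=C\cdot k\cdot \delta^{-2}$, not from counting phases.
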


The key point is that our upper bound on the error is in the $\ell_1$ norm (which is the largest among all $\ell_p$ norms).
Indeed the trade-off between the sparsity of the obtained solution and the error is much better understood if the error is measured in the $\ell_2$ norm. In this case, the natural greedy `coordinate ascent', as well as the algorithm based on sampling from a ``dense'' solution give non-trivial guarantees (see~\cite{Natarajan95, Shalev-Shwartz}). If the columns of $A$ are normalized to be of unit length, and we seek a solution $x$ with $\norm{x}_1 = 1$, one can find an $x'$ such that $\norm{Ax'-b}_2 < \eps$ and $x'$ has only $O(\frac{\log (1/\eps)}{\eps^2})$ non-zero co-ordinates. A similar bound can be obtained for general convex optimization problems, under strong convexity assumptions on the loss function~\citep{Shalev-Shwartz}.

While these methods are powerful, they do not apply (to the best of our knowledge) when the error is measured in the $\ell_1$ norm, as in our applications. More importantly, they do not take advantage of the fact that there {\em exists} a $k$-sparse solution (without losing a factor that depends on the largest eigenvalue of $A^{\dagger}$ as in~\cite{Natarajan95}, or without additional RIP style assumptions as in~\cite{Shalev-Shwartz}). 

The second property of our result is that we do not rely on the {\em uniqueness} of the solution (as is the case with approaches based on convex optimization). Our algorithm is more combinatorial in nature, and is inspired by multiplicative weight update based algorithms for the {\em set cover} problem, as described in Section~\ref{sec:algo}.
Finally, we remark that we do not need to assume that there is an ``exact'' sparse solution (i.e., $A\xopt = b$), and a weaker condition suffices. See Theorem~\ref{thm:main-approx} for the formal statement. 

Are there natural settings for the sparse recovery problem with non-negative $A, b$? 
One application we now describe is that of learning mixture models in the PAC framework~\citep{Valiant, KearnsMRRSS94}.

\subsubsection*{Learning mixture models}
A common way to model data in learning applications is to view it as
arising from a ``mixture model'' with a small number of parameters.
Finding the parameters often leads to a better understanding of the
data. The paradigm has been applied with a lot of success to data in
speech, document classification, and so on~\citep{ReynoldsR95,TitteringtonSM85,Lindsay95}.
Learning algorithms for Gaussian mixtures, hidden Markov models, topic models for documents, etc. have received wide attention both in theory and practice.

In this paper, we consider the problem of learning a mixture of Gaussians. Formally, given samples from a mixture of $k$ Gaussians in $d$ dimensions, the goal is to recover the components with high probability.  The problem is extremely well studied, starting with the early heuristic methods such as expectation-maximization (EM). The celebrated result of~\cite{Dasgupta99} gave the first rigorous algorithm to recover mixture components, albeit under a {\em separation} assumption. This was then improved in several subsequent works (c.f. \cite{SanjeevK01, VempalaW02, DasguptaS07}). 


More recently, by a novel use of the classical method of moments,~\cite{KalaiMV10} and~\cite{BelkinS10} showed that any 
$d$-dimensional Gaussian mixture with a constant number of components $k$ can be recovered in polynomial time (without any strong separation). 
However the dependence on $k$ in these works is exponential. 
\cite{MoitraV10} showed that this is {\em necessary} if we wish to recover the {\em true} components, even in one dimension.


In a rather surprising direction,~\cite{HsuK13}, and later \cite{BhaskaraCMV14} and \cite{BelkinMore}
showed that if the dimension $d$ is large (at least $k^{c}$ for a constant $c>0$), then 
tensor methods yield polynomial time algorithms for parameter recovery, under mild non-degeneracy assumptions. Thus  the case of small $d$ and much larger $k$ seems to be the most challenging for current techniques, if we do not have separation assumptions. Due to the lower bound mentioned above, we cannot hope to recover the true parameters used to generate the samples.

\backskip
\paragraph{Our parameter setting.}
We consider the case of constant $d$, and arbitrary $k$. As mentioned earlier, this case has sample complexity exponential in $k$ if we wish to recover the {\em true} components of the mixture (\cite{MoitraV10}). We thus consider the corresponding PAC learning question (\cite{Valiant}): given parameters $\eps, \delta>0$ and samples from a mixture of Gaussians as above, can we find a mixture of $k$ Gaussians such that the statistical distance to the original mixture is $< \eps$ with success probability (over samples) $\ge (1-\delta)$?

\backskip
\paragraph{Proper vs improper learning.}
The question stated above is usually referred to as {\em proper} learning: given samples from a distribution $f$ in a certain class (in this case a mixture of $k$ Gaussians), we are required to output a distribution $\hat{\bfvec}$ in the same class, such that $\norm{ \bfvec - \hat{\bfvec} }_1 \leq \eps$.  A weaker notion that is often studied is {\em improper} learning, in which $\hat{\bfvec}$ is allowed to be arbitrary (it some contexts, it is referred to as {\em density estimation}).

Proper learning is often much harder than improper learning. To wit, the best known algorithms for proper learning of Gaussian mixtures run in time exponential in $k$. It was first studied by~\cite{FeldmanSO06}, who gave an algorithm with sample complexity polynomial in $k, d$, but run time exponential in $k$. Later works improved the sample complexity, culminating in the works by \cite{DaskalakisK14, AcharyaJOS14}, who gave algorithms with {\em optimal} sample complexity, for the case of spherical Gaussians.
We note that even here, the run times are $\text{poly}(d,1/\eps)^k$.

Meanwhile for improper learning, there are efficient algorithms known for 
learning mixtures of very general one dimensional distributions 
(monotone, unimodal, log-concave, and so on). A sequence of works by \cite{ChanDSS13, ChanDSS14} give algorithms that have near-optimal sample complexity (of $\tcO(k/\eps^2)$), and run in polynomial time. However it is not known how well these methods extend to higher dimensions.

In this paper we consider something in between proper and improper learning. We wish to return a mixture of Gaussians, but with one relaxation: we allow
the algorithm to output a mixture with slightly more than $k$ components. Specifically, we obtain a tradeoff between the number of components in the output mixture, and the distance to the original mixture. Our theorem here is as follows

\begin{theorem}\label{thm:gauss-informal}(Informal)
Suppose we are given samples from a mixture of $k$ axis-aligned Gaussians in $d$ dimensions. There is an algorithm with running time and sample complexity $O \left( \frac{1}{\eps^3} \cdot \big( \frac{kd}{\eps^3} \big)^{d} \right)$, and outputs a mixture of $O(k/\eps^3)$ axis-aligned Gaussians which is $\eps$-close in statistical distance to the original mixture, with high probability.
\end{theorem}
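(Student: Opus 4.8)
The plan is to reduce the learning problem to the nonnegative sparse-recovery problem of Theorem~\ref{thm:main-informal}, or rather to its robust version, Theorem~\ref{thm:main-approx}. Three pieces are needed: (i) a finite net $\calN$ of candidate Gaussians so that the unknown mixture is $O(\eps)$-close in statistical distance to a $k$-sparse convex combination of elements of $\calN$; (ii) a discretization of $\rr^d$ that converts statistical distance into the $\ell_1$ distance between finite nonnegative vectors; and (iii) the sparse-recovery algorithm, which turns the promised $k$-sparse fit into an explicit $O(k/\eps^3)$-sparse one.

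First I would construct $\calN$. Each axis-aligned Gaussian is a product of one-dimensional Gaussians, and $\text{TV}\big(\prod_i \calN(\mu_i,\sigma_i^2),\,\prod_i \calN(\nu_i,\tau_i^2)\big)\le \sum_i \text{TV}\big(\calN(\mu_i,\sigma_i^2),\calN(\nu_i,\tau_i^2)\big)$, so it suffices to net each coordinate separately. After the standard preprocessing that (with high probability over the samples) confines essentially all the mass to a bounded box and rescales the relevant parameters into a bounded range, discretizing each $(\mu_i,\sigma_i^2)$ at spacing $\approx \eps\sigma/\text{poly}(k,d)$ and invoking the elementary perturbation bounds for one-dimensional Gaussians yields $\calN=\{G_1,\dots,G_N\}$ with $N=\big(\tfrac{kd}{\eps}\big)^{O(d)}$, such that the true mixture $\bfvec=\sum_j w_j G_j^*$ admits a replacement $\sum_j w_j \widetilde G_j$ with $\widetilde G_j\in\calN$ and $\norm{\bfvec-\sum_j w_j\widetilde G_j}_1\le\eps$. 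In particular there is a nonnegative $k$-sparse vector $\xopt$, indexed by $\calN$ with $\norm{\xopt}_1=1$, whose mixture is $\eps$-close to $\bfvec$.

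Next I set up the linear system. Partition the box into cells $C_1,\dots,C_M$ of side $h$; let $A\in\rplus^{M\times N}$ have $A_{\ell j}=G_j(C_\ell)$ and let $b\in\rplus^M$ be the vector of empirical cell frequencies from $m$ samples. Since $A\xopt$ is exactly the cell-mass vector of the net mixture, and (a) that mixture is $\eps$-close in $\ell_1$ to $\bfvec$, (b) the cell-mass vector of $\bfvec$ is within $\eps$ of $b$ once $m=\tilde{O}(M/\eps^2)$, and (c) mapping a density to its cell-mass vector does not increase $\ell_1$ distance, we get $\norm{A\xopt-b}_1\le O(\eps)\norm{b}_1$. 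Theorem~\ref{thm:main-approx} then returns, in time $\text{poly}(M,N)$, an $O(k/\eps^3)$-sparse nonnegative $\xalg$ with $\norm{A\xalg-b}_1\le O(\eps)$. Output $g=\sum_j(\xalg)_j G_j$, renormalized to total mass $1$ (this costs only $O(\eps)$, as $\norm{\xalg}_1=1\pm O(\eps)$). To bound $\text{TV}(g,\bfvec)$, pass to piecewise-constant projections $\bar g,\bar\bfvec$ onto the cells: $\norm{\bar g-\bar\bfvec}_1$ equals the $\ell_1$ distance of the corresponding cell-mass vectors, hence is at most $\norm{A\xalg-b}_1+\norm{b-(\text{cell masses of }\bfvec)}_1=O(\eps)$, while $\norm{g-\bar g}_1$ and $\norm{\bfvec-\bar\bfvec}_1$ are each $O(\eps)$ because every net Gaussian — and hence every convex combination of them — is $O(\eps)$-close to its cell projection once $h\lesssim\eps\sigma_{\min}/d$. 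Rescaling $\eps$ by a constant completes the correctness argument; the running time and sample complexity are governed by $|\calN|$, by $M$, and by $m=\tilde{O}(M/\eps^2)$, which after the (routine but tedious) accounting match the bounds in the statement.

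I expect step (ii) to be the main obstacle: making the discretization error small \emph{uniformly} across the net while keeping $M$ (and therefore $m$) bounded. The cell side must be fine on the scale of the \emph{smallest} standard deviation present, so the argument needs the variances to lie in a bounded range; securing this, along with handling near-degenerate components and the a priori unknown location and scale of the mixture, is the delicate part and is exactly what the preprocessing above must accomplish. By contrast, once $A\bx=b$ is in place, invoking Theorem~\ref{thm:main-approx} is immediate — and this is precisely where the ``no assumptions beyond non-negativity'' feature of the sparse-recovery result is essential, since the columns of $A$ are highly correlated and RIP- or incoherence-based recovery would not apply.
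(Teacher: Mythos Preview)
Your high-level reduction to Theorem~\ref{thm:main-approx} is the right idea, but two steps do not go through as written.

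First, Theorem~\ref{thm:main-approx} does \emph{not} take as hypothesis $\norm{A\xopt-b}_1\le O(\eps)\norm{b}_1$; it requires the entrywise inequality $A\xopt\le(1+\eps_0)b$ together with $\norm{A\xopt}_1=\norm{b}_1$. Being $\ell_1$-close is strictly weaker: if some cell has tiny empirical mass $b_\ell$ but the net mixture puts noticeably more mass there, the ratio $(A\xopt)_\ell/b_\ell$ can be arbitrarily large while $\norm{A\xopt-b}_1$ stays small. The paper handles this by (a) lumping all cells with small empirical count into a single bin $U$ and artificially setting $b(U)=2\eps$, which dominates the true mass there, and (b) on the remaining cells using a multiplicative Chernoff bound, legitimate precisely because those cells have empirical mass bounded below (Lemma~\ref{lem:fisclosetob}). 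Without this step the hypothesis of the sparse-recovery theorem is not met.

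Second, the ``standard preprocessing'' you invoke cannot force the variances into a bounded range: nothing in the problem prevents one component from having $\sigma=1$ and another $\sigma=10^{-100}$, and no global affine rescaling changes the ratio. A uniform grid fine enough for the narrowest component then has uncontrolled $M$, hence uncontrolled running time and sample complexity; you flag this as the main obstacle but the fix you gesture at does not exist. The paper sidesteps the issue by making the partition \emph{data-dependent}: along each axis the intervals are sample quantiles, so each contains an $\eps_1$ fraction of the first batch of samples. Then the number of cells is $\eps_1^{-d}$ regardless of the scales present, and any component of weight $\ge\eps/k$ automatically has each marginal mass at most $2\eps_1 k/\eps$ in every interval (the ``$(\cI_1,\dots,\cI_d),\eps$-good'' condition), which is exactly what makes the flattening error controllable via Lemmas~\ref{lem:onedimensionalbound} and~\ref{lem:ddimensionalbound}. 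The candidate columns are likewise built from sample pairs rather than from a parameter grid, again avoiding any dependence on an a priori scale. This sample-adaptive discretization is the main technical idea of the section, and it is the piece your proposal is missing.
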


The algorithm is an application of our result on solving linear systems. Intuitively, we consider a matrix whose columns are the probability density functions (p.d.f.) of all possible Gaussians in $\rr^d$, and try to write the p.d.f. of the given mixture as a sparse linear combination of these. To obtain finite bounds, we require careful discretization, which is described in Section~\ref{sec:gaussians}.

\paragraph{Is the trade-off optimal?}
It is natural to ask if our tradeoff in Theorem~\ref{thm:main-informal} is the best possible (from the point of view of efficient algorithms). We conjecture that the optimal tradeoff is $k/\eps^2$, up to factors of $O(\log (1/\eps))$ in general. We can prove a weaker result, that for obtaining an $\epsilon$ approximation in the $\ell_1$ norm to the general sparse recovery problem using polynomial time algorithms, we cannot always get a sparsity better than $k\log (1/\eps)$ unless $\mathcal{P} = \mathcal{NP}$. 

While this says that {\em some} dependence on $\eps$ is necessary, it is quite far from our algorithmic bound of $O(k/\eps^3)$.  In Section~\ref{sec:lower-bounds}, we will connect this to similar disparities that exist in our understanding of the set cover problem. We present a {\em random planted} version of the set cover problem, which is beyond all known algorithmic techniques, but for which there are no known complexity lower bounds. We show that unless this planted set cover problem can be solved efficiently, we cannot hope to obtain an $\eps$-approximate solution with sparsity $o(k/\eps^2)$. This suggests that doing better than $k/\eps^2$ requires significantly new algorithmic techniques.


\subsection{Basic notation}\label{sec:notation}
We will write $\rplus$ for the set of non-negative reals. 
For a vector $x$, its $i$th co-ordinate will be denoted by $x_i$, and for a matrix $A$, $A_i$
denotes the $i$th column of $A$. For vectors $x, y$, we write $x \le y$ to mean entry-wise inequality. 
We use $[n]$ to denote the set of integers $\{1, 2, \dots, n\}$. 
For two distributions $p$ and $q$, we use $\norm{p-q}_1$ to denote the $\ell_1$ distance between them.

\section{Approximate sparse solutions}\label{sec:algo}
\subsection{Outline}
Our algorithm is inspired by techniques for the well-known {\em set cover}
problem: given a collection of $n$ sets $S_1,
S_2, \dots, S_n \subseteq [m]$, find the sub-collection of the
smallest size that covers all the elements of $[m]$. In our problem,
if we set $A_i$ to be the indicator vector of the set $S_i$, and $b$
to be the vector with all entries equal to one, a sparse solution to
$Ax=b$ essentially covers all the elements of $[m]$ using only a {\em few}
sets, which is precisely the set cover problem. The difference between the two problems is that in linear equations, we are required to `cover' all the elements
{\em precisely once} (in order to have equality), and additionally, we are allowed to use sets fractionally.

Motivated by this connection, we define a potential function which captures the
notion of covering all the elements ``equally''. For a vector
$x \in \rr^n$, we define
\begin{equation}\label{eq:phi-defn}
\Phi(x) := \sum_{j} b_j (1+\delta)^{(Ax)_j/b_j}
\end{equation}

This is a mild modification of the potential function used in the multiplicative weight update method~(\cite{FreundS97, AroraHK12}).
Suppose for a moment that $\norm{b}_1 =1$. Now, consider
some $x$ with $\norm{x}_1 = 1$. If $(Ax)_j = b_j$ for all $j$, the
potential $\Phi(x)$ would be precisely $(1+\delta)$. On the other
hand, if we had $(Ax)_j/b_j$ varying significantly for different $j$,
the potential would (intuitively) be significantly larger; this
suggests an algorithm that tries to increment $x$ coordinate-wise,
while keeping the potential small. Since we change $x$
coordinate-wise, having a small number of iterations implies
sparsity. The key to the analysis is to prove that at any point in the
algorithm, there is a ``good'' choice of coordinate that we can
increment so as to make progress. We now make this intuition formal, and prove the following

\begin{theorem}\label{thm:main-approx}
Let $A$ be an $m \times n$ non-negative matrix, and $b \in \rr^m$ be a non-negative vector. Suppose there exists a $k$-sparse non-negative vector $\xopt$ such that $\norm{A\xopt}_1 = \norm{b}_1$ and $A\xopt \le (1+\eps_0) b$, for some $0< \eps_0 < 1/16$. Then for any $\eps \ge 16\eps_0$, there is an efficient algorithm that produces an $\xalg$ that is $O(k/\eps^3)$ sparse, and satisfies $\norm{ A \xalg - b}_1 \le \eps \norm{b}_1$.
\end{theorem}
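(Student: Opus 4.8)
The plan is to realize the set-cover / multiplicative-weights intuition sketched in the outline: run a greedy coordinate-ascent on $x$ driven by the potential $\Phi$, invoke the hypothetical $\xopt$ only inside the analysis (as a \emph{certificate} that a good coordinate to increment always exists), and read off sparsity from the number of increments performed. I would first reduce to a clean form: scaling $b$, assume $\norm{b}_1 = 1$; scaling each nonzero column $A_i$ to have $\norm{A_i}_1 = 1$ (and compensating in $x$) leaves $Ax$ unchanged, and now $\norm{\xopt}_1 = \norm{A\xopt}_1 = 1$ while any output $x$ has $\norm{Ax}_1 = \norm{x}_1$. So if we produce $x$ with $\norm{x}_1 \approx 1$ whose ratios $y_j := (Ax)_j/b_j$ are concentrated around $1$ under the measure $b$, then $\norm{Ax-b}_1 = \sum_j b_j\abs{y_j - 1}$ is small. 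Fix $\delta$ (to be tuned, around $\Theta(\eps)$) as in~\eqref{eq:phi-defn}.

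\textbf{Algorithm and progress lemma.} Start from $x = 0$, so $\Phi(x) = \norm{b}_1 = 1$. At a step with current $x$, set $w_j := (1+\delta)^{(Ax)_j/b_j}$, so that $\grad\Phi(x)_i = \ln(1+\delta)\,\iprod{A_i, w}$; choose the coordinate $i$ minimizing $\iprod{A_i, w}$ and increase $x_i$. The structural fact that drives everything is that this choice always makes only slow progress: since $\iprod{A\xopt, w} = \sum_i x^*_i\iprod{A_i,w} \le (1+\eps_0)\iprod{b, w} = (1+\eps_0)\Phi(x)$ and $\norm{\xopt}_1 = 1$, the best column satisfies $\iprod{A_i,w} \le (1+\eps_0)\Phi(x)$, and in fact a witness can be taken inside $\mathrm{supp}(\xopt)$, a set of size $k$ — this is what will tie the iteration count to $k$ rather than to $m$ or $n$. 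Increasing $x_i$ by a step $\theta$ and using $(1+\delta)^t \le 1+\delta t$ on $t\in[0,1]$ multiplies $\Phi$ by at most $1 + \delta(1+\eps_0)\theta$; the exponential form of $\Phi$ is precisely what makes over-shooting a low-mass coordinate prohibitively expensive, so the only role played by the sizes of the $b_j$ is in capping $\theta$ so that $\theta A_{ji}/b_j \le 1$ (the ``width'' constraint). I would run the process until $\norm{x}_1$ reaches $1$.

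\textbf{From $\Phi$ to the error bound, and sparsity.} Accumulating the per-step bound, at termination $\Phi(\xalg) \le (1+\delta)^{1+O(\eps)}$, while $\norm{x}_1 = 1$ gives $\sum_j b_j y_j = \sum_j b_j = 1$, hence (Jensen) $\Phi(\xalg)\ge 1+\delta$; the excess $\Phi(\xalg) - (1+\delta)$ then has to absorb the curvature of $t\mapsto(1+\delta)^t$, which yields control of a $b$-weighted second moment of $y_j - 1$ on the coordinates where $y_j$ is not grossly off (a coordinate with $y_j$ far from $1$ being ruled out directly, since it alone would overshoot the potential bound), and Cauchy--Schwarz against $\sum_j b_j = 1$ should convert this into $\norm{A\xalg - b}_1 = O(\eps)$ once $\delta$ and the step size are chosen correctly; rescaling $\eps$ by a constant then meets the stated bound and the hypothesis $\eps\ge 16\eps_0$. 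Since each step touches one coordinate, $\norm{\xalg}_0$ is at most the number of steps, which — combining full steps of size $\theta = \Theta(\eps^3/k)$ (at most $\norm{x}_1/\theta = O(k/\eps^3)$ of them) with the width-capped steps, charged via the $k$-sparse certificate to coordinates whose ratios $(Ax)_j/b_j$ advance by $\Omega(1)$ — is $O(k/\eps^3)$.

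\textbf{Main obstacle.} I expect the hard part to be making the previous paragraph actually work: extracting an $\ell_1$ (not $\ell_2$) guarantee whose sparsity is $O(k\cdot\mathrm{poly}(1/\eps))$ with \emph{no} dependence on $m$ or $n$. The exponential potential is what lets low-mass coordinates regulate themselves in the $\ell_1$ estimate, but it is also exactly what creates the width problem, and these pull in opposite directions: small steps keep the width bounded but push $\Phi$ towards $(1+\delta)e^{O(\delta)}$, whose gap above the ideal value $1+\delta$ is only second-order in $\delta$ — seemingly too crude a handle to force $\ell_1$ error as small as $\eps$ by a naive second-moment argument. Resolving this presumably requires a more careful choice of $\delta$, step size, and stopping rule (and perhaps treating the low-mass coordinates, or different error scales, separately), with the curvature-to-$\ell_1$ step costing one factor of $\eps$ and the width-induced extra increments costing a further $k/\eps^2$ factor; this is where the exponent $3$ in $O(k/\eps^3)$ comes from, and the paper itself flags it as likely suboptimal (conjecturally $k/\eps^2$).
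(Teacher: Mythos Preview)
Your framework matches the paper's --- same potential $\Phi$, same greedy coordinate-ascent with $\xopt$ serving only as a certificate in the analysis --- but two of the key technical choices are inverted, and these are exactly the obstacles you yourself flag as unresolved.

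\textbf{The $\Phi$-to-$\ell_1$ conversion.} Stopping at $\norm{x}_1 = 1$ and trying to extract an $\ell_1$ bound from $\Phi(x) \le (1+\delta)^{1+O(\eps)}$ via a second moment does not close: the curvature of $t\mapsto(1+\delta)^t$ gives at best $\sum_j b_j (y_j-1)^2 = O(\eps/\delta)$, and Cauchy--Schwarz then yields only $\norm{Ax-b}_1 = O(\sqrt{\eps/\delta})$, which cannot be made $\le \eps$ for any $\delta<1$. The paper's resolution is to \emph{not} stop at $\norm{x}_1=1$. It runs for $T = Ck/\delta^2$ iterations so that $\psi(x) := \norm{Ax}_1 \ge 1/\delta^2$, and outputs the rescaled $x/\norm{x}_1$. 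The relevant lemma (Lemma~\ref{lem:l1-bound}) is that $\Phi(x) \le (1+\delta)^{(1+\eta)\psi(x)}$ implies $\norm{Ax/\psi(x) - b}_1 \le 2\big(\eta + 1/(\delta\psi(x))\big)$, proved by a direct Jensen/convexity argument on the set of overshooting coordinates --- no second moment is involved. With $\psi(x)\ge 1/\delta^2$ the second term is $\le \delta$, and $\eta = O(\eps)$ comes from accumulating the per-step multiplicative bounds just as you outline.

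\textbf{The step-size constraint.} You impose an \emph{upper} bound on $\theta$ (the width constraint $\theta A_{ji}/b_j \le 1$, needed for your linearization $(1+\delta)^t\le 1+\delta t$), which creates potential dependence on $\min_j b_j$ and forces the vague ``charge width-capped steps to the certificate'' accounting. The paper does the opposite: it imposes a \emph{lower} bound $\theta \ge 1/(Ck)$ and proves such a step always exists (Lemma~\ref{lem:progress}). Writing $\xopt = \sum_{i\in\mathrm{supp}(\xopt)} \theta_i e_i$, subadditivity of $t\mapsto (1+\delta)^t - 1$ gives $\sum_i \big[\Phi(x+\theta_i e_i) - \Phi(x)\big] \le \Phi(x)\big[(1+\delta)^{1+\eps_0}-1\big]$, and an averaging argument over the $k$ terms produces an index with $\theta_i \ge 1/(Ck)$ and $\Phi(x+\theta_i e_i)/\Phi(x) \le (1+\delta)^{\theta_i(1+O(\eps))}$. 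No width cap is needed because this multiplicative bound is established directly for the possibly-large step $\theta_i$, not via linearization. Sparsity is then simply the iteration count $T = Ck/\delta^2 = O(k/\eps^3)$; your separate ``full steps of size $\Theta(\eps^3/k)$'' bookkeeping is not needed and was not actually justified.
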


\subsubsection*{Normalization}
For the rest of the section, $m, n$ will denote the dimensions of $A$, as in the statement of Theorem~\ref{thm:main-approx}. Next, note that by scaling all the entries of $A, b$ appropriately, we can assume without loss of generality that $\norm{b}_1 = 1$.  Furthermore, since for any $i$, multiplying $x_i$ by $c$ while scaling $A_i$ by $(1/c)$ maintains a solution, we may assume that for all $i$, we have $\norm{A_i}_1=1$ (if $A_i=0$ to start with, we can simply drop that column). Once we make this normalization, since $A, b$ are non-negative, any non-negative solution to $Ax = b$ must also satisfy $\norm{x}_1 = 1$.

\subsection{Algorithm}
We follow the outline above, having a total of $O(k/\eps^3)$
iterations. At iteration $t$, we maintain a solution $x \su{t}$, obtained
by incrementing precisely one co-ordinate of $x\su{t-1}$. We start
with $x\su{0} = 0$; thus the final solution is $O(k/\eps^3)$-sparse.

We will denote $y\su{t} := (Ax\su{t})$. Apart from the potential
$\Phi$ introduced above (Eq.\eqref{eq:phi-defn}), we keep track of
another quantity:
\[ \psi(x) := \sum_j (Ax)_j. \]
Note that since the entries of $A, x$ are non-negative, this is simply
$\norm{Ax}_1$.  
\begin{algorithm}[!h]
  ~~procedure \textsf{solve}($\{A, b, k, \eps\}$) \\
  ~~\texttt{// find an
  $\eps$-approximate solution. } \BlankLine \Begin{

      \nl Initialize $x\su{0} = 0$; set parameters $T = Ck/\delta^2$; $C = 16/\eps$; $\delta = \eps/16$. \label{alg:step-init}

      \For {$t = 0, \dots, T-1$} { \nl Find a coordinate $i$ and a
          scaling $\theta >0$ such that $\theta \ge 1/Ck$, and the
          ratio $\Phi(x\su{t} + \theta e_i)/\Phi(x\su{t})$ is minimized.
          
          \nl Set $x\su{t+1} = x\su{t} + \theta e_i$.

      }

      \nl Output $\xalg = x\su{t}/\norm{x\su{t}}_1$.
    }
\end{algorithm}

\paragraph{Running time.} Each iteration of the algorithm can be easily implemented in time $O( mn \log (mn)/\delta )$ by going through all the indices, and for each index, checking for a $\theta$ in multiples of $(1+\delta)$.

Note that the algorithm increases $\psi(x\su{t})$ by at least $1/Ck$ in every iteration (because the increase is precisely $\theta$, which is $\ge 1/Ck$), while increasing $\Phi$ as slowly as possible. Our next lemma says that once $\psi$ is large enough (while having a good bound on $\Phi$), we can get a ``good'' solution. I.e., it connects the quantities $\Phi$ and $\psi$ to the $\ell_1$ approximation we want to obtain.

\begin{lemma}\label{lem:l1-bound}
Let $x \in \rr^n$ satisfy the condition $\Phi(x) \le (1+\delta)^{(1+\eta) \psi(x)}$, for some $\eta >0$. Then we have 
\begin{equation}
\norm{ \frac{(Ax)}{\psi(x)} - b}_1 \le 2\left( \eta + \frac{1}{\delta \psi(x)} \right).
\end{equation}
\end{lemma}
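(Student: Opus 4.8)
Here is the plan.

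\medskip

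The plan is to reduce the lemma to a short consequence of the hypothesis after the right change of variables. Using the normalization $\norm{b}_1 = 1$ adopted at the start of the section, $\psi(x) = \norm{Ax}_1$, so both $b$ and $(Ax)/\psi(x)$ are probability vectors and their difference has coordinate sum zero. Write $(Ax)_j/\psi(x) = b_j + \Delta_j$, so that $\sum_j \Delta_j = 0$ and, consequently, $\norm{(Ax)/\psi(x) - b}_1 = \sum_j |\Delta_j| = 2S$ where $S := \sum_{j:\Delta_j > 0}\Delta_j$. Thus it suffices to show $S \le \eta + 1/(\delta\psi(x))$. (We may assume every $b_j > 0$: if $b_j = 0$ then, since the hypothesis forces $\Phi(x)$ to be finite, we must have $(Ax)_j = 0$, and such a coordinate contributes nothing to either side.)

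\medskip

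The key step is to peel a factor of $(1+\delta)^{\psi(x)}$ out of $\Phi(x)$. Substituting $(Ax)_j = \psi(x)(b_j + \Delta_j)$ into the definition \eqref{eq:phi-defn} gives $\Phi(x) = (1+\delta)^{\psi(x)}\sum_j b_j (1+\delta)^{\psi(x)\Delta_j/b_j}$, so the hypothesis $\Phi(x)\le (1+\delta)^{(1+\eta)\psi(x)}$ is exactly $\sum_j b_j(1+\delta)^{\psi(x)\Delta_j/b_j}\le (1+\delta)^{\eta\psi(x)}$. Now discard the nonnegative terms with $\Delta_j \le 0$, set $p := \sum_{j:\Delta_j>0}b_j \le 1$ (if $p = 0$ then $S = 0$ and we are done), and apply Jensen's inequality to the convex function $t\mapsto(1+\delta)^t$ with respect to the renormalized distribution $\{b_j/p\}_{j:\Delta_j>0}$:
\[ p\,(1+\delta)^{\psi(x)S/p} \;\le\; \sum_{j:\Delta_j>0} b_j(1+\delta)^{\psi(x)\Delta_j/b_j} \;\le\; (1+\delta)^{\eta\psi(x)} . \]
Taking logarithms to base $(1+\delta)$ and rearranging yields $S \le p\eta + \frac{p\ln(1/p)}{\psi(x)\ln(1+\delta)}$.

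\medskip

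To finish, I would bound the two terms separately: $p\eta \le \eta$ since $p \le 1$; and, using $p\ln(1/p) \le 1/e$ for all $p \in (0,1]$ together with $\ln(1+\delta) \ge \delta/2$ (valid in the regime $\delta \le 1$ we work in), the second term is at most $\frac{2}{e\delta\psi(x)} < \frac{1}{\delta\psi(x)}$. Hence $S \le \eta + 1/(\delta\psi(x))$, and multiplying by $2$ gives $\norm{(Ax)/\psi(x) - b}_1 = 2S \le 2(\eta + 1/(\delta\psi(x)))$, as claimed.

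\medskip

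I expect the only genuinely delicate point to be the origin of the additive $1/(\delta\psi(x))$ term: it is precisely the $\log(1/p)$ price paid for restricting attention to, and renormalizing over, the coordinates with positive deviation, and the argument works only because $p\log(1/p)$ is bounded by an absolute constant, so this price is negligible once $\psi(x)$ is large. Everything else — factoring $\Phi$, throwing away nonnegative terms, Jensen, and the elementary logarithm estimates — is routine. Note that applying Jensen directly with the full distribution $\{b_j\}$ only recovers $\Phi(x) \ge (1+\delta)^{\psi(x)}$, which gives no information about the deviation, so the split onto positive-deviation coordinates appears to be essential.
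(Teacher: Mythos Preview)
Your proof is correct and follows essentially the same route as the paper: reduce to the positive-deviation coordinates, apply Jensen over the renormalized distribution $\{b_j/p\}$ there, take logarithms, and finish with $p\ln(1/p)\le 1/e$ and $\ln(1+\delta)\ge \delta/2$. The only differences are cosmetic (you substitute $(Ax)_j=\psi(x)(b_j+\Delta_j)$ up front and factor out $(1+\delta)^{\psi(x)}$ explicitly, whereas the paper writes the same chain with $\tilde y_j = (Ax)_j/\psi(x)$); your handling of the $b_j=0$ and $p=0$ corner cases is a nice addition.
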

\begin{proof}
For convenience, let us write $y = Ax$, and $\ytil = \frac{y}{\norm{y}_1}$ (i.e., the normalized version). Note that since each column of $A$ has unit $\ell_1$ norm, we have $\psi(x) = \norm{Ax}_1 = \norm{y}_1$. Since $\ytil$ and $b$ are both normalized, we have
\[ \norm{\ytil - b}_1 = 2 \cdot \sum_{j ~:~ \ytil_j > b_j} (\ytil_j - b_j). \]
From now on, we will denote $S := \{j~:~ \ytil_j >b_j\}$, and write $p := \sum_{j \in S} b_j$. Thus to prove the lemma, it suffices to show that
\begin{equation}\label{eq:suffices1}
\sum_{j \in S} (\ytil_j - b_j) \le \left( \eta + \frac{1}{\delta \psi(x)} \right).
\end{equation}
Now, note that the LHS above can be written as $\sum_{j \in S} b_j \big( \frac{\ytil_j}{b_j} - 1 \big)$. We then have
\begin{align*}
(1+ &\delta)^{\frac{\psi(x)}{p} \cdot \sum_{j \in S} b_j \big( \frac{\ytil_j}{b_j} -1 \big) } \le (1+\delta)^{\sum_{j \in S} \frac{b_j}{p} \cdot \big( \frac{y_j}{b_j} - \psi(x) \big)} \\
& \le \sum_{j \in S} \frac{b_j}{p} \cdot (1+\delta)^{\big( \frac{y_j}{b_j} -\psi(x) \big)} \quad \text{(convexity)} \\
& \le \frac{1}{p} \cdot \sum_{j} b_j (1+\delta)^{ \big( \frac{y_j}{b_j} - \psi(x) \big) } \quad \text{ (sum over {\em all} $j$)} \\
& \le \frac{1}{p} \cdot \Phi(x) \cdot (1+\delta)^{-\psi(x)} \\
& \le \frac{1}{p} \cdot (1+\delta)^{ \eta \psi(x) } \qquad \text{(hypothesis on $\Phi$).}
\end{align*}
Thus taking logarithms (to base $(1+\delta)$), we can bound the LHS of Eq.\eqref{eq:suffices1} by
\[ \frac{p}{\psi(x)} \log_{(1+\delta)} (1/p) + p \eta \le \frac{1}{\delta \psi(x)} + \eta. \]
The last inequalities are by using the standard facts that $\ln (1+\delta) \ge \delta/2$ (for $\delta<1$), and $p \ln (1/p) \le (1/e)$ for any $p$, and since $p\le 1$.
This shows Eq.~\eqref{eq:suffices1}, thus proving the lemma.
\end{proof}

The next lemma shows that we can always find an index $i$ and a $\theta$ as we seek in the algorithm.

\begin{lemma}\label{lem:progress}
Suppose there exists a $k$-sparse vector $x^*$ such that $\Phi(x^*) \le (1+\delta)^{(1+\eps_0)}$. Then for any $C>1$, and any $x\su{t} \in \rr^n$, there exists an index $i$, and a scalar $\theta \ge 1/(Ck)$, such that
\[ \Phi( x\su{t} + \theta e_i ) \le (1+\delta)^{\theta (1+\eps_0) (1+\delta)/(1-(1/C))} \Phi(x\su{t}). \]
\end{lemma}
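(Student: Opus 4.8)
The plan is to exhibit a good coordinate by an averaging argument over the support of $x^*$, using $x^*$ itself as a "budget" that is cheap to spend. Write $\Phi(x) = \sum_j w_j(x)$ where $w_j(x) := b_j (1+\delta)^{(Ax)_j/b_j}$, so that incrementing coordinate $i$ by $\theta$ multiplies the $j$th term by $(1+\delta)^{\theta (A_i)_j/b_j}$. The first step is to control this multiplicative factor: for $\theta$ small enough (here $\theta \le 1/(Ck) \le 1$ since $\norm{A_i}_1 = 1 \le b_j/(A_i)_j$ after normalization... more carefully, using $(A_i)_j/b_j$ possibly large, one uses the convexity bound $(1+\delta)^{\theta z} \le 1 + \theta z \big((1+\delta) - 1\big)/\! \cdots$), I would use the standard inequality $(1+\delta)^{\theta z} \le 1 + \theta z \delta (1+\delta)$ valid for $\theta z \le 1$, or more robustly work directly with $\Phi(x\su t + \theta e_i)/\Phi(x\su t) \le \exp\big(\theta \ln(1+\delta)\sum_j \frac{(A_i)_j}{b_j}\frac{w_j(x\su t)}{\Phi(x\su t)}\big)$ after the convexity step. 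Define the "cost" of coordinate $i$ at the current point as $c_i := \sum_j \frac{(A_i)_j}{b_j} \cdot \frac{w_j(x\su t)}{\Phi(x\su t)}$; the goal becomes: find $i \in \operatorname{supp}(x^*)$ with $c_i$ not much larger than $(1+\eps_0)$, and then convert the per-coordinate multiplicative bound into the stated form.

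The heart of the argument is the averaging. Consider the weighted average of $c_i$ against the distribution $x^*$ (recall $\norm{x^*}_1 = 1$ after normalization): $\sum_i x^*_i c_i = \sum_j \frac{w_j(x\su t)}{\Phi(x\su t)} \cdot \frac{(A x^*)_j}{b_j} \le (1+\eps_0) \sum_j \frac{w_j(x\su t)}{\Phi(x\su t)} = (1+\eps_0)$, using the hypothesis $Ax^* \le (1+\eps_0) b$. Hence there exists $i$ in the support of $x^*$ with $c_i \le (1+\eps_0)$. Now I want to take $\theta = x^*_i$ for this $i$; the catch is that $x^*_i$ could be smaller than $1/(Ck)$, which is why the algorithm insists on $\theta \ge 1/(Ck)$. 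Since $x^*$ is $k$-sparse and $\norm{x^*}_1 = 1$, at most a $(1/C)$-fraction of the total mass can sit on coordinates with $x^*_i < 1/(Ck)$; so restricting the averaging to the set $L := \{i \in \operatorname{supp}(x^*) : x^*_i \ge 1/(Ck)\}$, which carries mass $\ge 1 - 1/C$, gives $\sum_{i \in L} x^*_i c_i \le (1+\eps_0)$, hence some $i \in L$ has $c_i \le (1+\eps_0)/(1 - 1/C)$. For this $i$ we may legitimately take $\theta := x^*_i \ge 1/(Ck)$.

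Finally I convert the bound on $c_i$ into the stated inequality. With the chosen $i$ and $\theta = x^*_i \le 1$, apply the convexity/exponential estimate: $\Phi(x\su t + \theta e_i) = \sum_j w_j(x\su t)(1+\delta)^{\theta (A_i)_j/b_j} \le \sum_j w_j(x\su t)\big(1 + (\theta (A_i)_j/b_j)((1+\delta)-1)(1+\delta)^{\,?}\big)$ — cleaner is to factor $\theta$ out via $(1+\delta)^{\theta z} \le (1+\delta)^{\theta \cdot \lceil \cdots\rceil}$... the route the paper likely wants is $(1+\delta)^{\theta z} \le 1 + \theta\big((1+\delta)^{z'} - 1\big)$ type bounds replaced by: since $\theta \le 1$, convexity of $u \mapsto (1+\delta)^u$ on $[0, z]$ gives $(1+\delta)^{\theta z} \le 1 + \theta((1+\delta)^z - 1) \le (1+\delta)^{\theta z \cdot (1+\delta)}$ after another elementary step, yielding $\Phi(x\su t + \theta e_i) \le \Phi(x\su t)\big(1 + \theta \delta (1+\delta) c_i\big) \le \Phi(x\su t)(1+\delta)^{\theta(1+\delta) c_i} \le (1+\delta)^{\theta(1+\eps_0)(1+\delta)/(1-1/C)}\Phi(x\su t)$, which is exactly the claim. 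The main obstacle is getting the chain of elementary inequalities $(1+\delta)^{\theta z} \le \cdots \le (1+\delta)^{\theta(1+\delta) c_i}$ tight enough that only a single extra factor of $(1+\delta)$ is lost (this is where $\theta \le 1$, equivalently $Ck \ge 1$, is used, together with $\ln(1+\delta) \le \delta$); everything else is averaging and the $k$-sparsity mass argument.
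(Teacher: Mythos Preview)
Your route differs from the paper's. You linearize first: define the cost $c_i = \sum_j \frac{(A_i)_j}{b_j}\cdot\frac{w_j(x\su t)}{\Phi(x\su t)}$, bound the $x^*$-weighted average $\sum_i x^*_i c_i \le 1+\eps_0$, restrict to the ``heavy'' coordinates $L=\{i:x^*_i\ge 1/(Ck)\}$ to extract an $i\in L$ with small $c_i$, and finally convert the linear bound on $c_i$ back into a bound on $\Phi(x\su t+\theta e_i)$. The paper instead works directly with the nonlinear increments $\Delta_i=\Phi(x\su t+\theta_i e_i)-\Phi(x\su t)$: it uses \emph{subadditivity} of $t\mapsto (1+\delta)^t-1$ (Lemma~\ref{lem:subadditive}) to bound $\sum_i\Delta_i\le\Phi(x\su t)\big[(1+\delta)^{1+\eps_0}-1\big]$, and then the averaging Lemma~\ref{lem:averaging} immediately gives an $i$ with $\theta_i\ge 1/(Ck)$ and $\Delta_i\le \theta_i\,\Phi(x\su t)\cdot\frac{(1+\delta)^{1+\eps_0}-1}{1-1/C}$, already in nearly the final form. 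The paper's route thus avoids your ``convert back'' step entirely.

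There is a genuine gap in that conversion step. To pass from a bound on $c_i$ to a bound on $\Phi(x\su t+\theta e_i)=\sum_j w_j(x\su t)(1+\delta)^{\theta z_j}$ with $z_j=(A_i)_j/b_j$, you need an inequality of the shape $(1+\delta)^{\theta z_j}\le 1+O(\delta)\,\theta z_j$, and this requires each $\theta z_j$ to be bounded. Your stated justification (``this is where $\theta\le 1$ is used'') is insufficient, since $z_j$ can be arbitrarily large. The correct reason is that with $\theta=x^*_i$ one has $\theta z_j = x^*_i (A_i)_j/b_j \le (Ax^*)_j/b_j \le 1+\eps_0$, which invokes the \emph{pointwise} hypothesis $Ax^*\le(1+\eps_0)b$ of Theorem~\ref{thm:main-approx}; your averaging step $\sum_i x^*_i c_i\le 1+\eps_0$ uses that pointwise form as well. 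This is strictly stronger than the lemma's stated hypothesis $\Phi(x^*)\le(1+\delta)^{1+\eps_0}$ (though the paper's own proof, at the line ``using the bound we have on $\Phi(x^*)$'', tacitly uses the pointwise bound too, so this is not a defect unique to your argument). Finally, your last displayed inequality $1+\theta\delta(1+\delta)c_i\le(1+\delta)^{\theta(1+\delta)c_i}$ is the wrong direction when $\theta(1+\delta)c_i<1$; using the second inequality of Lemma~\ref{lem:basic-ineq} correctly costs one extra factor of $(1+\delta)$ in the exponent, harmless downstream but not matching the lemma as stated.
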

\begin{proof}
$x^*$ is $k$-sparse, so we may assume w.l.o.g., that $x^* = \theta_1 e_1 +
\theta_2 e_2 + \dots + \theta_k e_k$.  Let us define
\[ \Delta_i = \Phi( x\su{t} + \theta_i e_i ) - \Phi( x\su{t} ). \]

First, we will show that
\begin{equation}\label{eq:toshow2}
\sum_{i=1}^k \Delta_i \le \Phi(x\su{t}) \big[ (1+\delta)^{(1+\eps_0)} -1 \big].
\end{equation}

To see this, note that the LHS of Eq.\eqref{eq:toshow2} equals
\begin{align*}
&\sum_{j} b_j \left( \sum_{i=1}^k (1+\delta)^{(A(x\su{t}+\theta_i e_i))_j/b_j} - (1+\delta)^{(Ax\su{t})_j/b_j} \right) \\
&= \sum_j b_j (1+\delta)^{(Ax\su{t})_j/b_j} \left( \sum_{i=1}^k (1+\delta)^{(A(\theta_i e_i))_j/b_j} - 1 \right) \\
&\le \sum_j b_j (1+\delta)^{(Ax\su{t})_j/b_j} \left( (1+\delta)^{(A x^*)_j/b_j} -1\right).
\end{align*}
In the last step, we used the fact that the function $f(t) := (1+\delta)^t -1$ is sub-additive (Lemma~\ref{lem:subadditive}), and the fact that $x^* = \sum_{i=1}^k \theta_i e_i$. Now, using the bound we have on $\Phi(x^*)$, we obtain Eq.~\eqref{eq:toshow2}. The second observation we make is that since $\norm{x^*}_1 = 1$, we have 
$\sum_i \theta_i = 1$.

Now we can apply the averaging lemma~\ref{lem:averaging} with the numbers $\{\Delta_i, \theta_i\}_{i=1}^k$, to conclude that for any $C>1$, there exists an $i \in [k]$ such that $\theta_i \ge 1/(Ck)$, and
\[ \Delta_i \le \Phi(x\su{t}) \cdot \frac{ (1+\delta)^{(1+\eps_0)}-1 }{1- (1/C)}. \]
Thus we have that for this choice of $i$, and $\theta = \theta_i$,
\[ \Phi(x\su{t} + \theta e_i) \le \Phi(x\su{t}) \left( 1+ \theta \cdot \frac{ (1+\delta)^{(1+\eps_0)}-1 }{1- (1/C)} \right). \]
Now we can simplify the term in the parenthesis using Lemma~\ref{lem:basic-ineq} (twice) to obtain
\begin{align*}
1 + \theta &\cdot \frac{ (1+\delta)^{(1+\eps_0)}-1 }{1- (1/C)} \le 1 + \frac{ \theta \cdot \delta (1+\eps_0)}{1-(1/C)} \\
&\le (1+\delta)^{\theta (1+\eps_0) (1+\delta)/(1-(1/C))}.
\end{align*}

This completes the proof of the lemma.
\end{proof}

\begin{proof}[Proof of Theorem~\ref{thm:main-approx}]
By hypothesis, we know that there exists an $\xopt$ such that $\norm{A\xopt}_1=1$ (or equivalently $\norm{\xopt}_1=1$) and $A\xopt \le (1+\xopt) b$. Thus for this $\xopt$, we have $\Phi(\xopt) \le (1+\delta)^{(1+\eps_0)}$, so
Lemma~\ref{lem:progress} shows that in each iteration, the algorithm succeeds in finding an index $i$ and $\theta > 1/Ck$ satisfying the conclusion of the lemma. Thus after $T$ steps, we end up with $\psi(x\su{T}) \ge 1/\delta^2$, thus we can appeal to Lemma~\ref{lem:l1-bound}. Setting $\eta := 2(\eps_0 + \delta + 1/C)$, and observing that
\[ (1+\eps_0)(1+\delta)/(1-1/C) < 1+ \eta, \] the lemma implies that the $\ell_1$ error is
$\le 2(\eta + \delta) < \eps$, from our choice of $\eta, \delta$.
This completes the proof.
\end{proof}

\begin{remark}\label{remark:oracle}
The algorithm above finds a column to add by going through indices $i \in [m]$, and checking if there is a scaling of $A_i$ that can be added. But in fact, any procedure that allows us to find a column with a small value of $\Phi(x\su{t+1})/\Phi(x\su{t})$ would suffice for the algorithm. For example, the columns could be parametrized by a continuous variable, and we may have a procedure that only searches over a discretization.\footnote{This is a fact we will use in our result on learning mixtures of Gaussians.} We could also have an optimization algorithm that outputs the column to add.
\end{remark}


\section{Learning Gaussian mixtures}\label{sec:gaussians}

\subsection{Notation}

Let $N(\mup,\sigp^2)$ denote the density of a $d$-dimensional axis-aligned Gaussian distribution with mean $\mup$ 
and diagonal covariance matrix $\sigp^2$ respectively.  
Thus a $k$-component Gaussian mixture has the density $\sum^k_{r=1} \wi N(\mupi,\sigpi^2)$. 
We use $\bfvec$ to denote the underlying mixture and $\bp_r$ to denote component $r$.
We use ($~\hat{}~$) to denote empirical or other estimates; the usage becomes clear in context.
For an interval $I$, let $|I|$ denote its length.
For a set $S$, let $n(S)$ be the number of samples in that set.

\subsection{Algorithm}
\label{sec:algorithm}
The problem for finding components of a $k$-component Gaussian mixture $f$
 can be viewed as finding a sparse solution for system of equations
\begin{equation}
\label{eq:gauss}
Aw = \bfvec,
\end{equation}
where columns of $A$ are the possible mixture 
components and $w$ is the weight vector
and $\bfvec$ is the density of the underlying mixture.
If $\bfvec$ is known exactly, and $A$ is known explicitly, \eqref{eq:gauss} can be solved using
\textsf{solve}$(\{A, \bfvec, k, \epsilon\})$.

However, a direct application of \textsf{solve}
has two main issues. Firstly, $\bfvec$ takes values 
over $\mathbb{R}^d$ and thus is an infinite dimensional vector.
Thus a direct application of \textsf{solve} is not computationally feasible.
Secondly $\bfvec$ is unknown and has to be estimated  
using samples. Also, for algorithm \textsf{solve}'s performance guarantees to hold,
we need an estimate $\hat{\bfvec}$ such that 
$\hat{\bfvec}(x) \geq \bfvec(x)(1-\epsilon)$, for all $x$.
This kind of a global multiplicative condition is difficult to satisfy 
for continuous distributions. To avoid these issues,  we carefully discretize the mixture of Gaussians. 
More specifically, we partition $\mathbb{R}^d$ into rectangular regions $\cS = \{S_1,S_2,\ldots \}$
such that $S_i \cap S_j = \emptyset$ and $\cup_{S \in \cS} S = \mathbb{R}^d$.
Furthermore we flatten the Gaussian within each region to induce a new distribution over $\rr^d$ as follows:

\begin{defn}
For a distribution $\bp$ and a partition $\cS$, the new distribution $p^{\cS}$ is defined as\footnote{We are slightly abusing notation, with $p^{\cS}$ denoting both the p.d.f. and the distribution itself.} 
\begin{itemize}
\item
If $\bx, \by \in S$ for some $S \in \cS$, then $\bp^{\cS}(\bx) = \bp^{\cS}(\by)$
\item
$\forall S \in \cS$, $\bp(S) = \bp^{\cS}(S)$.
\end{itemize}
\end{defn}

Note that we use the standard notation that $p(S)$ denotes the total probability mass of the distribution $p$ over the region $S$.
Now, let $A^{\cS}$ be a matrix with rows indexed by $S \in \cS$ and columns indexed by distributions $p$
such that $A^{\cS}(S,p) = p(S)$.
 $A^{\cS}$ is a matrix with potentially infinitely many columns, but finitely many rows (number of regions in our partition).

Using samples, we generate a partition of 
$\mathbb{R}^d$ such that the following properties hold.
\begin{enumerate}
\item
${\bfvec}^{\cS}(S)$ can be estimated to sufficient multiplicative 
accuracy for each set $S \in \cS$.
\item
If we output a mixture of $\cO(k/\epsilon^3)$ Gaussians $A^{\cS}w'$ such that
$\sum_{S \in \cS} |(A^{\cS}w')(S) - {\bfvec}^{\cS}(S)|$ is small,
 then $\norm{Aw' - {\bfvec}}_1$ is also small.
\end{enumerate}

For the first one to hold, we require the sets to have large probabilities and hence requires $\cS$ to be a coarse partition of $\mathbb{R}^d$.
The second condition requires the partition to be `fine enough', that a solution after partitioning can be used to produce a solution for the corresponding continuous distributions. How do we construct such a partition?

If all the Gaussian components have similar variances and the means are not too far apart, then a rectangular grid with  carefully chosen width would suffice for this purpose. However, since we make no assumptions
on the variances, we use a sample-dependent partition (i.e., use some of the samples from the mixture in order to get a rough estimate for the `location' of the probability mass). To formalize this,  we need a few more definitions.
\begin{defn}
A partition of a real line is given by $\cI = \{I_1,I_2,\ldots \}$ 
where $I_t$s are continuous intervals, 
$I_t \cap I_{t'} = \emptyset \, \forall t,t'$,
and $\cup_{I \in \cI} I = \mathbb{R}$.
\end{defn}

Since we have $d$ dimensions, we have $d$ such partitions. We denote by $\cI_i$ the partition of axis $i$.  The interval $t$ of coordinate $i$ is denoted by $I_{i,t}$.

For ease of notation, we use subscript $r$ to denote components (of the mixture),
$i$ to denote coordinates ($1 \le i \le d$), and $t$ to denote the interval indices corresponding to coordinates.
We now define induced partition based on intervals
and a notion of ``good'' distributions.

\begin{defn}
Given partitions 
$\cI_{1}, \cI_{2}, \cI_{3}, \ldots \cI_{d}$ for coordinates $1$ to $d$,
define $\cI_{1}, \cI_{2}, \cI_{3}, \ldots \cI_{d}$-induced partition $\cS = \{S_v\}$
as follows: for every $d$-tuple $\bv$, $x \in S_{\bv}$
iff $x_i \in I_{i,v_i} \, \forall \bv$.	
\end{defn}

\begin{defn}
A product distribution $p = p_1 \times p_2 \times \ldots \times p_d$
is $(\cI_{1}, \cI_{2}, \cI_{3}, \ldots \cI_{d}), \epsilon$-good if for every 
coordinate $i$ and every interval $I_{i,t}$, $p_i(I_{i,t}) \leq \epsilon$.
\end{defn}

Intuitively, $\epsilon$-good distributions have small mass in every interval
and hence binning it would not change the distribution by much.
Specifically in Lemma~\ref{lem:ddimensionalbound}, we show that 
for such distributions $\norm{p-p^{\cS}}_1$ is bounded.

We now have all the tools to describe the algorithm.
Let $\epsilon_1 = \epsilon^3/kd$.
The algorithm first divides $\mathbb{R}^d$ into a rectangular 
gridded fine partition $\cS$ 
with $\approx \epsilon^{-d}_1$
 bins such that most of them have probability 
$\geq \epsilon^{d+1}_1$. 
We then group the bins with probability $< \epsilon^{d+1}_1$
to create a slightly coarser partition $\cS'$. 
The resulting $\cS'$ is coarse enough that 
$\bfvec^{\cS'}$ can be estimated efficiently, and is also fine enough to ensure that we do not lose much of the Gaussian structure by binning.

We then limit the columns of
$A^{\cS'}$ to contain only Gaussians
that are $(\cI_1,\cI_2,\ldots \cI_d), 2\epsilon^2/d$-good.
In Lemma~\ref{lem:ddimensionalbound}, we show that 
for all of these we do not lose much of the Gaussian structure by binning.
Thus \textsf{solve}$(A^{\cS'}w, b, k,\epsilon)$ yields us the required solution.
With these definitions in mind, the algorithm is given in \textsf{Learn}$(\{(x_1,\ldots x_{2n}), k, \eps\})$.
Note that the number of rows in $A^{\cS'}$ is $|\cS'| \leq |\cS| = \epsilon^{-d}_1$.

We need to bound the time complexity of finding a Gaussian in each iteration of the algorithm (to apply Remark~\ref{remark:oracle}). 
To this end we need to find a finite set of candidate Gaussians (columns of $A^{\cS'}$) such that running \textsf{solve} 
using a matrix restricted to these columns (call it $A^{\cS'}_\text{finite}$) finds the desired mixture up to error $\epsilon$.
Note that for this, we need to ensure that there is at least one candidate (column of $A^{\cS'}_\text{finite}$) that is close to each of the true mixture components.

We ensure this as follows.
Obtain a set of $n'$ samples from the Gaussian mixture and for each pair of samples $x,y$ consider the Gaussian whose mean is $x$
and the variance along coordinate $i$ is $(x_i-y_i)^2$. Similar to the proof of the 
one-dimensional version in~\cite{AcharyaJOS14}, 
it follows that for any $\epsilon'$ choosing 
$n' \geq \Omega((\epsilon')^{-d})$, this set contains Gaussians that are $\epsilon'$ close to each of the underlying mixture components. 
For clarity of exposition, we ignore this additional error which can be made arbitrarily small and we treat $\epsilon'$ as $0$.



\begin{algorithm}[!ht]
  ~~procedure \textsf{Learn}($\{(x_1,\ldots x_{2n}), k, \eps\}$) \\
\Begin{

      \nl Set parameter $\epsilon_1 = \epsilon^3/(kd)$.

      \nl Use first $n$ samples to find 
      $\cI_1,\cI_2 \ldots \cI_{d}$ such that
      number of samples $\bx$ such that 
      $x_i \in  I_{i,t}$ is $n\epsilon_1$. 
      Let $\cS$ be the corresponding induced partition.

      \nl Use the remaining $n$ samples to do:

      \nl \, \, Let $U = \cup S_{\bv} : n(S_{\bv}) \leq n \epsilon^{d}_1 \epsilon$.

\nl  \, \, Let $\cS' = \{U\} \cup \{S \in \cS : n(S_{\bv}) > n \epsilon^{d}_1 \epsilon \}$.

\nl  \, \, Set $b(U) = 2\epsilon$ and \[ \forall S \in \cS' \setminus \{U\},
b(S) = \frac{(1-2\epsilon) n(S)}{ \sum_{S \in \cS' \setminus \{U\}} n(S)}\]

      \nl 
      Let $A^{\cS'}$ be the matrix with columns corresponding to
      distributions $p$ that are  $(\cI_1,\cI_2,\ldots \cI_d), 2\epsilon^2/d$-good axis-aligned
      Gaussians, and $A^{\cS'}_\text{finite}$ be the candidates obtained as above, using $\epsilon' = \epsilon_1/10$.

\nl
\textsf{solve}($A^{\cS'}_\text{finite}, b, k, 64\epsilon)$ using Remark~\ref{remark:oracle}.

\nl
Output the $w$.

}
\end{algorithm}

\subsection{Proof of correctness}

We first show that $b$ satisfies the necessary conditions for \textsf{solve} that
are given in Theorem~\ref{thm:main-approx}. The proof follows from Chernoff bound 
and the fact that empirical mass in most sets $S \in \cS'$ is $\geq \epsilon^{d}_1\epsilon$.
\begin{lemma}
\label{lem:fisclosetob}
If $n \geq  \frac{8}{\epsilon^{d}_1 \epsilon^{3}} \log \frac {2}{\delta \epsilon^{d}_1 }$, then with probability $\geq 1-\delta$
\[
\forall S \in \cS' ,b(S) \geq \bfvec^{\cS'}(S)(1-3\epsilon),
\]
and $\sum_{S \in \cS'}|\bfvec^{\cS'}(S)-b(S)| \leq 6 \epsilon$.
%
\end{lemma}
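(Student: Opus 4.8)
The plan is to control two sources of error separately: the sampling error between the empirical mass $n(S)/n$ and the true mass $\bfvec^{\cS'}(S)$ in each region, and the ``rounding'' introduced by lumping all low-count cells into $U$ and renormalizing. First I would fix the partition $\cS$ (which by Step~2 of \textsf{Learn} is determined by the first $n$ samples) and condition on it; all probabilities below are over the second batch of $n$ i.i.d.\ samples. For a region $S \in \cS'\setminus\{U\}$ we have $n(S) > n\epsilon_1^d\epsilon$ by construction, so the empirical frequency is bounded below by a non-trivial amount, and a multiplicative Chernoff bound gives $n(S)/n \ge \bfvec(S)(1-\epsilon)$ simultaneously for all such $S$ with probability $\ge 1-\delta$, provided $n \gtrsim \frac{1}{\epsilon_1^d\epsilon^3}\log\frac{|\cS|}{\delta}$; since $|\cS| \le \epsilon_1^{-d}$, this matches the stated hypothesis on $n$ up to constants. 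One also needs the complementary direction $n(S)/n \le \bfvec(S)(1+\epsilon)$ and control on the total mass $\sum_{S\in\cS'\setminus\{U\}} n(S)/n$, which concentrates around $\bfvec(\cup_{S\neq U} S) = 1 - \bfvec^{\cS'}(U)$; note $\bfvec^{\cS'}(U)$ itself is at most roughly $2\epsilon$ because each cell merged into $U$ has empirical count $\le n\epsilon_1^d\epsilon$ and there are at most $\epsilon_1^{-d}$ of them, and the Chernoff bound transfers this to true mass.

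Next I would translate these concentration statements into the two claimed inequalities. For $S = U$: we set $b(U) = 2\epsilon$, and since $\bfvec^{\cS'}(U) \le 2\epsilon(1+o(1))$ from the previous paragraph (in fact the cell-count bound gives something slightly below $2\epsilon$, leaving room for the $(1-3\epsilon)$ slack), we get $b(U) \ge \bfvec^{\cS'}(U)(1-3\epsilon)$. For $S \in \cS'\setminus\{U\}$: by definition $b(S) = (1-2\epsilon)\, n(S) / \sum_{S'\neq U} n(S')$; using $n(S) \ge n\bfvec(S)(1-\epsilon)$ in the numerator and $\sum_{S'\neq U} n(S') \le n(1+\epsilon)$ (or $\le n$, since it is a count) in the denominator, together with $\bfvec^{\cS'}(S) = \bfvec(S)$ for genuine cells, one obtains $b(S) \ge \bfvec^{\cS'}(S)\cdot(1-2\epsilon)(1-\epsilon)/(1+\epsilon) \ge \bfvec^{\cS'}(S)(1-3\epsilon)$ after bounding the product of the three factors below by $1-3\epsilon$ (valid for $\epsilon$ small). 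Combining both cases yields the first displayed inequality.

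For the $\ell_1$ bound $\sum_{S\in\cS'} |\bfvec^{\cS'}(S) - b(S)| \le 6\epsilon$, I would split the sum into the $U$ term and the rest. The $U$ term contributes $|b(U) - \bfvec^{\cS'}(U)| \le 2\epsilon$ since both quantities lie in $[0, 2\epsilon(1+o(1))]$. For the remaining terms, $\sum_{S\neq U} |\bfvec^{\cS'}(S) - b(S)| \le \sum_{S\neq U}|\bfvec(S) - n(S)/n| + \sum_{S\neq U}|n(S)/n - b(S)|$; the first piece is at most $2\epsilon$ by the two-sided Chernoff bound (it is essentially $\|\,$empirical $-$ true$\,\|_1$ restricted to heavy cells, which is $O(\epsilon)$ once $n$ is as large as assumed — using that there are only $\epsilon_1^{-d}$ cells each with mass at least $\sim\epsilon_1^d\epsilon$), and the second piece is the renormalization error, bounded by $2\epsilon$ because $b(\cdot)$ on $\cS'\setminus\{U\}$ is just $n(\cdot)/n$ rescaled by the factor $(1-2\epsilon) n / \sum_{S'\neq U} n(S')$, which differs from $1$ by $O(\epsilon)$. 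Summing $2\epsilon + 2\epsilon + 2\epsilon = 6\epsilon$ gives the claim. The main obstacle I anticipate is the bookkeeping in the Chernoff step: because the additive deviations per cell are tiny ($\sim \epsilon_1^d\epsilon$) but there are many cells ($\sim\epsilon_1^{-d}$), one must apply the \emph{multiplicative} form of the bound cell-by-cell and then sum, and verify that the stated sample size $n \ge \frac{8}{\epsilon_1^d\epsilon^3}\log\frac{2}{\delta\epsilon_1^d}$ is exactly what makes the union bound over $\le \epsilon_1^{-d}$ events go through with failure probability $\delta$ — getting the constants and the $\log$ argument to line up is the delicate part, everything else is routine.
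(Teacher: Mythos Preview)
Your proposal is essentially correct and tracks the paper's argument closely for the first (pointwise) inequality: Chernoff plus a union bound over the at most $\epsilon_1^{-d}$ cells, then separate handling of $U$ versus the heavy cells. One technical point: you phrase the concentration as a purely multiplicative Chernoff bound applied to cells with $n(S)>n\epsilon_1^d\epsilon$, but the failure probability in the multiplicative form scales like $e^{-n\bfvec(S)\epsilon^2}$, and you only know the \emph{empirical} mass is large, not $\bfvec(S)$. The paper sidesteps this by using the Bernstein-type bound $|\hat{\bfvec}(S)-\bfvec(S)|\le \tfrac12\sqrt{\bfvec(S)\epsilon_1^d\epsilon^3}+\tfrac12\epsilon_1^d\epsilon^3$, valid uniformly for all cells, and then converts it to a multiplicative statement using $\hat{\bfvec}(S)\ge \epsilon_1^d\epsilon$. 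Your outline survives once you do the same, but as written the ``multiplicative Chernoff'' step is slightly circular.

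The one genuine methodological difference is in the $\ell_1$ bound. You split $\sum_S|\bfvec^{\cS'}(S)-b(S)|$ into the $U$ term, an empirical-vs-true piece, and a renormalization piece, bounding each by $2\epsilon$. This works but is more labor than necessary. The paper instead observes that both $b$ and $\bfvec^{\cS'}$ are probability vectors on $\cS'$, so
\[
\sum_{S\in\cS'}|b(S)-\bfvec^{\cS'}(S)| \;=\; 2\!\!\sum_{S:\,b(S)\le \bfvec^{\cS'}(S)}\!\!\bigl(\bfvec^{\cS'}(S)-b(S)\bigr)\;\le\; 2\sum_{S}3\epsilon\,\bfvec^{\cS'}(S)\;=\;6\epsilon,
\]
where the middle inequality is exactly the first conclusion $b(S)\ge(1-3\epsilon)\bfvec^{\cS'}(S)$. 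So the second claim is a two-line corollary of the first; your triangle-inequality route is a valid alternative but does not exploit that $b$ was constructed to be a distribution.
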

\begin{proof}
Let $\hat{\bfvec}^{\cS'}$ be the empirical distribution over $\cS'$.
Since $|\cI_i| = \frac{1}{\epsilon_1}$,
the induced partition $\cS$ satisfies 
$|\cS| \leq \frac{1}{\epsilon^{d}_1}$. 
Hence by the Chernoff and union bounds, for
$n \geq \frac{8}{\epsilon^{d}_1 \epsilon^{3}} \log \frac {2}{\delta \epsilon^{d}_1 }$, with probability $\geq 1 -\delta$,
\begin{equation}
\label{eq:chern}
|\bfvec^{\cS'}(S) - \hat{\bfvec}^{\cS'}(S)| \leq \sqrt{\bfvec^{\cS'}(S)\epsilon^{d}_1\epsilon^3}/2 + \epsilon^{d}_1\epsilon^{3}/2, \, \forall S \in \cS.
\end{equation}
For the set $U$,
\begin{align*}
\bfvec^{\cS'}(U) 
&= \sum_{S : \hat{\bfvec}^{\cS'}(S) \leq \epsilon^{d}\epsilon } \bfvec^{\cS'}(S) \\
& \leq \epsilon + \sum_{S \in \cS} \sqrt{\bfvec^{\cS'}(S) \epsilon^{d}_1 \epsilon^3 }/2 + \sum_{S \in \cS} \epsilon^{d}_1\epsilon^{3}/2\\
& \leq 2\epsilon,
\end{align*}
where the second inequality follows from the concavity of $\sqrt{x}$. However $b(U) = 2 \epsilon$ and hence $b(U) \geq \bfvec^{\cS'}(U)(1-2\epsilon)$.

By Equation~\eqref{eq:chern}, 
\begin{align*}
|\bfvec^{\cS'}(S) - \hat{\bfvec}^{\cS'}(S)| 
& \leq \sqrt{\bfvec^{\cS'}(S)\epsilon^{d}_1 \epsilon^3}/2 + \epsilon^{d}_1 \epsilon^3/2  \\
& \leq  \sqrt{\hat{\bfvec}^{\cS'}(S)\epsilon^{d}_1 \epsilon^3}/2 + \epsilon^{d}_1 \epsilon^3/2 \\
& \leq \hat{\bfvec}^{\cS'}(S) \left( \sqrt{ \epsilon^2}/2 + \epsilon^2/2  \right) \\
& \leq \hat{\bfvec}^{\cS'}(S) \epsilon.
\end{align*}
The penultimate inequality follows from the fact that $\hat{\bfvec}^{\cS'}(S) \geq \epsilon^{d}_1 \epsilon$.
Hence $\hat{\bfvec}^{\cS'}(S) \geq \bfvec^{\cS'}(S)(1-\epsilon)$.
Furthermore by construction $b(S) \geq \hat{\bfvec}^{\cS'}(S)(1-2\epsilon)$. Hence $b(S) \geq \bfvec^{\cS'}(S)(1-3\epsilon) \, \forall S \in \cS'$.

For the second part of the lemma observe that $b$ and $\bfvec^{\cS'}$ are distributions over $\cS'$. Hence
\begin{align*}
\sum_{S \in \cS} |b(S) -\bfvec^{\cS'}(S)| 
& = 2 \sum_{S : b(S) \leq \bfvec^{\cS'}(S)} 
 \bfvec^{\cS'}(S) -b(S) \\
& \leq 2 \sum_{S \in S} \bfvec^{\cS'}(S) \cdot 3 \epsilon = 6\epsilon.
\end{align*}
\end{proof}

Using the above lemma, we now prove that \textsf{Learn} returns a good solution such that 
$\norm{\bfvec^{\cS} - \hat{\bfvec}^{\cS}}_1 \leq \cO(\epsilon)$.

\begin{lemma}
\label{lem:solutionovers}
Let $n \geq \max \left( \frac{2}{\epsilon^2_1} \log \frac{2d}{\delta},
 \frac{8}{\epsilon^{d}_1 \epsilon^{3}} \log \frac {2}{\delta \epsilon^{d}_1} \right) $. With probability $\geq 1 -2\delta$, 
\textsf{Learn} returns a solution $\hat{\bfvec}$ such that 
the resulting mixture satisfies
\[
\norm{\bfvec^{\cS} - \hat{\bfvec}^{\cS}}_1 \leq 74 \epsilon.
\]
\end{lemma}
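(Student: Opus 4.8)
The plan is to deduce the statement from a single invocation of Theorem~\ref{thm:main-approx} on the instance $(A^{\cS'}_\text{finite}, b, k, 64\epsilon)$ built inside \textsf{Learn}, followed by a triangle inequality against Lemma~\ref{lem:fisclosetob}. Throughout I condition on two events. The first, of probability $\ge 1-\delta$ under the bound $n\ge \frac{8}{\epsilon^{d}_1\epsilon^3}\log\frac{2}{\delta\epsilon^{d}_1}$, is the conclusion of Lemma~\ref{lem:fisclosetob}: for every $S\in\cS'$ one has $b(S)\ge \bfvec^{\cS'}(S)(1-3\epsilon)$, and $\norm{b-\bfvec^{\cS'}}_1\le 6\epsilon$. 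The second, of probability $\ge 1-\delta$ under $n\ge \frac{2}{\epsilon^2_1}\log\frac{2d}{\delta}$, is faithfulness of the axis partitions: by a Dvoretzky--Kiefer--Wolfowitz (uniform empirical-CDF) bound applied to each of the $d$ coordinates, every interval $I_{i,t}$ of the data-built partition $\cI_i$ — which carries empirical marginal mass exactly $\epsilon_1$ — has true marginal mass $\bfvec_i(I_{i,t})\le 2\epsilon_1$. A union bound gives overall success probability $\ge 1-2\delta$, as required.

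Granting these events, I would first produce the $k$-sparse witness demanded by Theorem~\ref{thm:main-approx}. Write $\bfvec=\sum_{r=1}^k \wi\bp_r$. Since $\wi\,\bp_{r,i}(I_{i,t})\le\bfvec_i(I_{i,t})\le 2\epsilon_1=2\epsilon^3/(kd)$, any component with $\wi\ge\epsilon/k$ has $\bp_{r,i}(I_{i,t})\le 2\epsilon^2/d$ on every interval, i.e.\ is $(\cI_1,\dots,\cI_d),2\epsilon^2/d$-good, and hence — after replacing it by the $\epsilon'=\epsilon_1/10$-close sample-pair Gaussian guaranteed by the candidate construction, invoking the paper's convention of treating $\epsilon'$ as $0$ — occurs as a column of $A^{\cS'}_\text{finite}$; the remaining ``light'' components carry total weight at most $k\cdot\epsilon/k=\epsilon$. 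Letting $\xopt$ be the vector of heavy-component weights rescaled to unit $\ell_1$ norm (rescaling factor $1/W$ with $W=\sum_{r\text{ heavy}}\wi\ge 1-\epsilon$), and using that each column of $A^{\cS'}$ integrates to $1$ over $\cS'$, one gets $\norm{A^{\cS'}_\text{finite}\xopt}_1=1=\norm{b}_1$ and, for every $S\in\cS'$,
\[ (A^{\cS'}_\text{finite}\xopt)(S)=\frac1W\!\!\sum_{r\text{ heavy}}\!\!\wi\,\bp_r(S)\;\le\;\frac1W\,\bfvec^{\cS'}(S)\;\le\;\frac{1}{W(1-3\epsilon)}\,b(S)\;\le\;(1+\eps_0)b(S), \]
with $\eps_0=O(\epsilon)$; tracking the constants (and, if one wants comfortable slack, tightening the ``light'' threshold) one arranges $\eps_0\le 4\epsilon<1/16$, for which the constants $64$ and $2\epsilon^2/d$ hard-wired into \textsf{Learn} are exactly tuned, so that $64\epsilon\ge 16\eps_0$ and Theorem~\ref{thm:main-approx} applies. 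Its conclusion, together with Remark~\ref{remark:oracle} — which is precisely what licenses running \textsf{solve} over the finite candidate set while still, at each step, finding a column matching the witness column of Lemma~\ref{lem:progress} up to the ignored $\epsilon'$ — yields an $O(k/\epsilon^3)$-sparse weight vector $w$ with $\norm{A^{\cS'}_\text{finite}w-b}_1\le 64\epsilon$.

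Finally I would assemble the estimate. Writing $\hat{\bfvec}=A^{\cS'}_\text{finite}w$ for the output mixture, the quantity $\norm{A^{\cS'}_\text{finite}w-b}_1$ equals $\sum_{S\in\cS'}|\hat{\bfvec}^{\cS'}(S)-b(S)|=\norm{\hat{\bfvec}^{\cS'}-b}_1$, so the triangle inequality with the second conclusion of Lemma~\ref{lem:fisclosetob} gives $\norm{\hat{\bfvec}^{\cS'}-\bfvec^{\cS'}}_1\le 64\epsilon+6\epsilon=70\epsilon$; adding the remaining $O(\epsilon)$ slacks — the dropped light components, the ignored $\epsilon'$, and the $O(\epsilon)$ discrepancy between flattening over $\cS'$ versus the finer partition (both $\bfvec$ and the output place only $O(\epsilon)$ mass on the merged region $U$) — lands within the stated $74\epsilon$. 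The main obstacle is the middle paragraph: certifying that a $k$-sparse non-negative combination of \emph{good} Gaussians lies within a $(1+\eps_0)$ factor of $b$ with $\eps_0$ small enough to feed Theorem~\ref{thm:main-approx} — i.e.\ correctly quarantining the low-weight components that need not be good, and carrying the multiplicative constants — while checking that the finite candidate set genuinely suffices as the oracle for \textsf{solve}; the probabilistic inputs (Lemma~\ref{lem:fisclosetob} and the DKW step) and the closing triangle inequality are routine.
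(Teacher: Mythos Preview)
Your proposal is correct and follows essentially the same route as the paper: condition on the DKW event and on Lemma~\ref{lem:fisclosetob}, show the heavy components are $(\cI_1,\dots,\cI_d),2\epsilon^2/d$-good and hence furnish a normalized $k$-sparse witness $\xopt$ with $A^{\cS'}\xopt\le(1+\eps_0)b$ for $\eps_0\approx4\epsilon$, invoke Theorem~\ref{thm:main-approx} (via Remark~\ref{remark:oracle}) to get the $64\epsilon$ bound, and finish with the triangle inequality against Lemma~\ref{lem:fisclosetob} and the $\cS$-versus-$\cS'$ discrepancy on $U$. One small clean-up: in your final assembly, the ``dropped light components'' are already absorbed into the witness construction and contribute nothing further to the $74\epsilon$ --- the sole extra $4\epsilon$ comes from passing from $\cS'$ to $\cS$, using $\bfvec^{\cS}(U)\le 2\epsilon$ exactly as the paper does.
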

\begin{proof}
We first show that $A^{\cS'}$
has columns corresponding to  all the components $r$, 
such that $\wi \geq \epsilon/k$. 
For a mixture $\bfvec$ let $f_i$ be the projection of $\bfvec$
on coordinate $i$.
Note that $\hat{f}_i(I_{i,t}) = \epsilon_1 \, \forall i,t$.
Therefore by Dvoretzky-Kiefer-Wolfowitz theorem (see, ~\cite{Massart90}) and the union bound if $n \geq \frac{2}{\epsilon^2_1} \log \frac{2d}{\delta}$, with probability $\geq 1- \delta$,
\[
f_{i}(I_{i,t}) \leq \epsilon_1 + \epsilon_1 \leq 2 \epsilon_1 \, \forall i,t.
\]
Since $f_{i} = \sum^k_{r=1} \wj p_{r,i}$, with probability $\geq 1- \delta$,
\[
p_{r,i} (I_{i,t}) \leq \frac{2\epsilon_1}{\wj} \, \forall i,r.
\]
If $\wj \geq \epsilon/k$, then $p_{r,i}(I_{i,t}) \leq 2\epsilon^2/d$
and thus $A^{\cS'}$ contains all the underlying 
components $r$ such that $\wj \geq \epsilon/k$.
Let $w^{*}$ be the weights corresponding to components such that $\wj \geq \epsilon/k$.
Therefore $\norm{w^{*}}_1 \geq 1-\epsilon$. 
Furthermore by Lemma~\ref{lem:fisclosetob},
 $b(S) \geq f^{\cS'}(1-3\epsilon) \geq (1-3\epsilon)(A^{\cS'}w^{*})(S)$.
 Therefore, we have 
 $\norm{b} = \norm{A^{\cS'}w^*/\norm{w^*}}$
 and 
 \begin{align*}
  b(S) 
  & \geq (1-3\epsilon)(A^{\cS'}w^{*})(S) \norm{w^{*}}/\norm{w^{*}} \\
  & \geq (1-4\epsilon) (A^{\cS'}w^{*}/\norm{w^{*}})(S).
 \end{align*}
Hence, By Theorem~\ref{thm:main-approx}, 
algorithm returns a solution $A^{\cS'}w' $
such that $\norm{A^{\cS'}w' - b}_1 \leq 64 \epsilon$.
Thus by Lemma~\ref{lem:fisclosetob}, 
$\sum_{S \in \cS'} |(A^{\cS'}w')(S) - f^{\cS'}(S)| \leq 70 \epsilon$.
Let $\hat{\bfvec}$ be the estimate corresponding to solution $w'$.
Since $f^{\cS'}$ and $\hat{f}^{\cS'}$ are flat within sets $S$,
we have $\norm{\hat{f}^{\cS'} - f^{\cS'}}_1 \leq 70 \epsilon$.

Since $\cS'$ and $\cS$ differ only in the 
set $U$ and by Lemma~\ref{lem:fisclosetob}, $\bfvec^{\cS}(U) = \bfvec^{\cS'}(U) \leq \epsilon/(1-3\epsilon)$,
we have 
\[
\norm{\hat{\bfvec}^{\cS} - f^{\cS}}_1 \leq \norm{\hat{\bfvec}^{\cS'} - f^{\cS'}}_1 + 
2\bfvec^{\cS}(U) \leq 74 \epsilon.
\]
Note that the total error probability is $\leq 2\delta$.
\end{proof}

We now prove that if $\hat{\bfvec}^{\cS}$ is close to $\bfvec^{\cS}$,
then $\hat{\bfvec}$ is close to $\bfvec$. We first prove that 
flattened Gaussians in one dimension are
close to their corresponding underlying Gaussian.

\begin{lemma}
\label{lem:onedimensionalbound}
Let $p$ be a one dimensional Gaussian distribution 
and $\cI = (I_1,I_2,\ldots)$ be a partition of the real line such that $\forall I \in \cI$,
 $I$ is a continuous interval
and $p(I) \leq \epsilon$. Then
\[
\norm{p - p^{\cI}}_1 \leq 30 \sqrt{\epsilon}.
\]
\end{lemma}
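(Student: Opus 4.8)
The plan is to control the $\ell_1$ discrepancy between a one-dimensional Gaussian $p$ and its flattened version $p^{\cI}$ by splitting the real line into a ``central'' region and ``tails.'' Write $\norm{p - p^{\cI}}_1 = \sum_{I \in \cI} \int_I |p(x) - p^{\cI}(x)|\,dx$, and recall that on each interval $I$, $p^{\cI}$ is the constant $p(I)/|I|$. On a single interval the contribution is $\int_I |p(x) - p(I)/|I||\,dx$, which is at most (twice) the oscillation of $p$ on $I$ times $|I|$, i.e. roughly $|I| \cdot (\max_I p - \min_I p)$. So the strategy is: on intervals where $p$ is not varying too wildly, bound this by something proportional to the variation of $p$ across $I$; and on the far tails, simply bound the total mass crudely.

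First I would set $\sigma$ to be the standard deviation of $p$ and $\mu$ its mean, and handle the tail region $\{|x - \mu| > R\sigma\}$ for a threshold $R$ to be chosen (something like $R = \Theta(\sqrt{\log(1/\epsilon)})$ or even just a constant times $1/\sqrt{\epsilon}$-type bound — here a constant $R$ may actually suffice given we only want $O(\sqrt\epsilon)$). On the tails, both $\int_{\text{tail}} p$ and $\sum_{I \subseteq \text{tail}} p(I) = p(\text{tail})$ are small; actually we must be a little careful because an interval may straddle the boundary, but at most two such boundary intervals exist and each carries mass $\le \epsilon$, so the total tail contribution to $\norm{p - p^\cI}_1$ is at most $2p(\text{tail}) + O(\epsilon)$. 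The key quantitative input is that the Gaussian density is Lipschitz: $|p'(x)| \le \frac{1}{\sigma^2\sqrt{2\pi}} |x-\mu| e^{-(x-\mu)^2/2\sigma^2}$, so on an interval $I$ within the central region, $\max_I p - \min_I p \le |I| \cdot \sup_I |p'|$. Summing $|I|^2 \sup_I|p'|$ over intervals in the central region is the crux.

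The main obstacle — and where I'd spend the most care — is bounding $\sum_{I} |I|^2 \sup_{x \in I}|p'(x)|$ over the central intervals, using only the constraint $p(I) \le \epsilon$ for each $I$. The point is that where $p$ is large, the constraint $p(I) \le \epsilon$ forces $|I|$ to be small (roughly $|I| \lesssim \epsilon / \min_I p$), so $|I|^2 \sup_I |p'| \lesssim |I|^2 \cdot \frac{|x-\mu|}{\sigma^2} p \lesssim \frac{\epsilon |I|}{\sigma^2}|x-\mu| \cdot \frac{\sup_I p}{\min_I p}$. Assuming $\epsilon$ small enough that $\sup_I p \le 2 \min_I p$ on central intervals (which follows from the Lipschitz bound once $|I|$ is small relative to $\sigma$, itself a consequence of $p(I)\le\epsilon$ and $p$ being $\Omega(1/\sigma)$ in the bulk — this needs a short self-contained argument, possibly handling a few intervals near the mode separately), we get a telescoping/Riemann-sum bound $\sum_I |I|^2 \sup_I|p'| \lesssim \frac{\epsilon}{\sigma^2} \int_{\text{central}} |x - \mu|\, dx \lesssim \frac{\epsilon}{\sigma^2} \cdot (R\sigma)^2 = \epsilon R^2$. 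Choosing $R$ a constant then gives $O(\epsilon)$ from the central part and $O(\sqrt\epsilon)$ — actually $O(e^{-R^2/2})$, which we then balance — from the tails; optimizing the split between ``$\epsilon R^2$'' and ``tail mass $\approx e^{-R^2/2}$'' is where the $\sqrt\epsilon$ (rather than $\epsilon$) and the constant $30$ come from. I would also need the elementary per-interval inequality $\int_I |p(x) - p(I)/|I|\,|\,dx \le 2|I|(\max_I p - \min_I p)$, and possibly a direct bound $\int_I |p - p(I)/|I|| \le 2p(I) \le 2\epsilon$ as a fallback for any pathological interval, used together with the fact that the number of such intervals is controlled.

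<br>

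Finally, I would assemble the pieces: $\norm{p - p^\cI}_1 \le (\text{central contribution}) + (\text{tail contribution}) + (\text{boundary-interval contribution}) \le C_1 \epsilon R^2 + C_2 e^{-R^2/2} + C_3 \epsilon$, and verify that with an appropriate choice of $R$ (or simply $R$ a universal constant, since we only need $30\sqrt\epsilon$ and may assume $\epsilon$ bounded away from $1$) the total is at most $30\sqrt\epsilon$. The constant $30$ is generous, so I would not optimize it — just track the inequalities loosely enough to land under it.
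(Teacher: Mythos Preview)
Your approach is essentially the paper's: reduce to $N(0,1)$, split the intervals into bulk, tail, and a bounded number of boundary intervals, bound tail and boundary trivially, and on each bulk interval combine $|I|\le \epsilon/\min_I p$, a first-order Taylor estimate $\int_I|p-p^{\cI}|\le |I|^2\sup_I|p'|$, and a $\sup_I p/\min_I p=O(1)$ bound. The one streamlining in the paper is that it defines the bulk by a \emph{density} threshold $\min_{x\in I}p(x)\ge\sqrt{\epsilon/(2\pi)}$ rather than a spatial cutoff $|x-\mu|\le R\sigma$; this immediately gives the uniform bound $|I|\le\sqrt{2\pi\epsilon}$ on bulk intervals and lets the per-interval estimate be summed against $p(I)$ (yielding $O(\sqrt\epsilon)\cdot\E|X|$) instead of against $|I|$, so no optimization over $R$ is needed and the tail mass is automatically $O(\sqrt\epsilon)$.
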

\begin{proof}
If $p$ and $\cI$ are simultaneously scaled or translated,
then the value of $\norm{p-p^{\cI}}_1$ remains unchanged. 
Hence proving the lemma for $p = N(0,1)$ is sufficient.
We first divide $\cI$ into $\cI_1,\cI_2,\cI_3$ 
depending on the minimum and maximum values of $p(x)$ in the corresponding intervals.
\[
I 
\in 
\begin{cases}
\cI_{1} & \text{\, if\, } \min_{x \in I} p(x) \geq \sqrt{\epsilon/(2\pi)}, \\
\cI_{2} & \text{\, if\, } \max_{x \in I} p(x) \leq \sqrt{\epsilon/(2\pi)}, \\
\cI_{3}  & \text{\, else.\,}
\end{cases}
\]
The $\ell_1$ distance between $p$ and $p^{\cI}$ is
\[
\norm{p-p^{\cI}}_1 = \sum_{I \in \cI} \int_{x \in I} | p(x) - p^{\cI}(x)|dx.
\]
We bound the above summation by breaking it into
terms corresponding to $\cI_1, \cI_2$, and $\cI_3$
respectively.
Observe that $|\cI_{3}| \leq 2$ and $p(I) \leq \epsilon\, \forall \, I \in \cI_{3}$. Hence,
\[
 \sum_{I \in \cI_3} \int_{x \in I} | p(x) - p^{\cI}(x)|dx \leq 2 \epsilon.
\]
Since $\max_{x \in I} p(x)$ for every interval in $I \in \cI_{2}$ is $\leq \sqrt{\epsilon/(2\pi)}$, by Gaussian tail bounds
\begin{align*}
 \sum_{I \in \cI_2} \int_{x \in I} | p(x) - p^{\cI}(x)|dx
 & \leq  \sum_{I \in \cI_2} \int_{x \in I}  p(x) dx \\
& \leq \sqrt{\epsilon}.
\end{align*}
For every interval $I \in \cI_{1}$ we first bound its interval length and maximum value of $p'(x)$.
Note that
\[
p(I) \geq |I| \min_{y \in I} p(y).
\]
In particular since $p(I) \leq \epsilon$ and $\min_{y \in I} p(y) \geq \sqrt{\epsilon/(2\pi)}$, 
$|I| \leq \sqrt{2\pi\epsilon}$.
Let $s = \max_{x \in I} |p'(x)|$.
\[
s =  \max_{x \in I} |p'(x)|
 = \max_{x \in I}   \frac{\lvert x \rvert }{\sqrt{2\pi}} e^{-x^{2}/2}
\leq  \max_{x \in I} |x| \cdot  \max_{x \in I} p(x).
\]
Since $\min_{y \in I} p(y) \geq \sqrt{\epsilon/(2\pi)}$, we have 
$\max_{y \in I} |y| \leq \sqrt{\log 1/\epsilon}$.
Let $y_1 = \text{argmax}_{y \in I} p(y)$ and $y_2 = \text{argmin}_{y \in I} p(y)$,
then 
\begin{align*}
\frac{\max_{y \in I} p(y)}{\min_{y \in I} p(y)} 
& = \frac{p(y_2)}{p(y_{1})}  \\
& = e^{(y^2_1 - y^2_2)/2} \\
&= e^{(y_1 - y_2)(y_2+y_1)/2} \\
& \leq e^{|I| \sqrt{\log \frac{1}{\epsilon}}} \\
& \leq e^{\sqrt{2\pi \epsilon \log \frac{1}{\epsilon}}}\\
& \leq 5.
\end{align*}
Since $p^{\cI}(x) = p(I)/|I|$, by Rolle's theorem
$\exists x_0$ such that $p^{\cI}(x) = p(x_0) \, \forall x$.
By first order Taylor's expansion,
\begin{align*}
\int_{x \in I} | p(x) - p^{\cI}(x)|dx 
& \leq \int_{x \in I} |(x-x_0) \max_{y \in [x_0,x]} |p'(y)|dx \\
& \leq s \int_{x \in I} |x -x_0|   dx \\
& \leq s |I|^2 \\
& \leq s \left( \frac{p(I)}{\min_{y \in I} p(y)}\right) \sqrt{2 \pi \epsilon} \\
& \leq  \sqrt{2\pi\epsilon} p(I) \max_{x \in I} |x| \cdot \frac{\max_{y \in I} p(y)}{\min_{y \in I} p(y)} \\
& \leq 5 \sqrt{2\pi\epsilon} p(I) \max_{x \in I} |x|,
\end{align*}
where the last three inequalities follow from the bounds on 
$|I|$, $s$, and $\frac{\max_{y \in I} p(y)}{\min_{y \in I} p(y)}$
respectively.
Thus,
\begin{align*}
\int_{x \in I} | p(x) - p^{\cI}(x)|dx  
& \leq 5 \sqrt{2\pi\epsilon} p(I) \max_{x \in I} |x| \\
& \leq 5 \sqrt{2\pi\epsilon} \int_{x \in I} p(x) (|x| + \sqrt{\epsilon}) dx.
\end{align*}
Summing over $I \in \cI_{1}$, we get the above summation is 
$\leq 5 \sqrt{2\pi\epsilon}(1+\sqrt{\epsilon})$.
Adding terms corresponding to $\cI_1, \cI_2$, and $\cI_3$ we get
\[
\norm{p-p^{\cI}}_1  \leq 5 \sqrt{2\pi\epsilon} (1 + \sqrt{\epsilon}) + \sqrt{\epsilon} + 2 \epsilon < 30 \sqrt{\epsilon}.
\]
\end{proof}

Using the above lemma we now show that for every $d$-dimensional
 $(\cI_1,\cI_2\ldots \cI_d),\epsilon$-flat Gaussian is close to the unflattened one.

\begin{lemma}
\label{lem:ddimensionalbound}
For every $(\cI_1,\cI_2,\ldots \cI_d),\epsilon$-good axis-aligned Gaussian distribution $\bp = p_1\times p_2\times \ldots p_d$, we have 
\[
\norm{\bp- \bp^{\cS}}_1 \leq 30d \sqrt{\epsilon}.
\]
\end{lemma}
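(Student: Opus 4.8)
The plan is to reduce the $d$-dimensional bound to the one-dimensional bound of Lemma~\ref{lem:onedimensionalbound} via a hybrid (telescoping) argument over coordinates. Since $\bp = p_1 \times p_2 \times \cdots \times p_d$ is a product distribution and the partition $\cS$ is the $(\cI_1, \ldots, \cI_d)$-induced partition, flattening $\bp$ within each box $S_{\bv}$ amounts to flattening each marginal $p_i$ within its interval $I_{i,v_i}$ independently. Concretely, I claim $\bp^{\cS} = p_1^{\cI_1} \times p_2^{\cI_2} \times \cdots \times p_d^{\cI_d}$: the product of the one-dimensional flattenings is constant on each box (since each factor is constant on its interval) and assigns each box $S_{\bv}$ the mass $\prod_i p_i(I_{i,v_i}) = \bp(S_{\bv})$, which are exactly the two defining properties of $\bp^{\cS}$.

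Given this identification, I would introduce the hybrid distributions $q_\ell := p_1^{\cI_1} \times \cdots \times p_\ell^{\cI_\ell} \times p_{\ell+1} \times \cdots \times p_d$ for $\ell = 0, 1, \ldots, d$, so that $q_0 = \bp$ and $q_d = \bp^{\cS}$. Then by the triangle inequality,
\[
\norm{\bp - \bp^{\cS}}_1 \le \sum_{\ell=1}^{d} \norm{q_{\ell-1} - q_\ell}_1.
\]
Each consecutive pair $q_{\ell-1}, q_\ell$ differs only in the $\ell$-th factor, where one has $p_\ell$ and the other has $p_\ell^{\cI_\ell}$, with all other factors identical. Using the standard fact that for product distributions $\norm{(r \times u) - (r' \times u)}_1 = \norm{r - r'}_1$ (integrate out the common factor $u$), we get $\norm{q_{\ell-1} - q_\ell}_1 = \norm{p_\ell - p_\ell^{\cI_\ell}}_1$. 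Since $\bp$ is $(\cI_1, \ldots, \cI_d), \epsilon$-good, each marginal $p_\ell$ satisfies $p_\ell(I_{\ell,t}) \le \epsilon$ for all $t$, so Lemma~\ref{lem:onedimensionalbound} applies to give $\norm{p_\ell - p_\ell^{\cI_\ell}}_1 \le 30\sqrt{\epsilon}$. Summing the $d$ terms yields $\norm{\bp - \bp^{\cS}}_1 \le 30 d \sqrt{\epsilon}$.

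The main obstacle — really the only non-routine point — is rigorously establishing the factorization $\bp^{\cS} = \prod_i p_i^{\cI_i}$ and the marginalization identity $\norm{(r \times u) - (r' \times u)}_1 = \norm{r - r'}_1$; both are elementary but should be stated carefully since the whole argument rests on them. One should also note that $p_\ell^{\cI_\ell}$ is a genuine Gaussian-marginal flattening (a one-dimensional distribution flattened on a partition of $\mathbb{R}$), so Lemma~\ref{lem:onedimensionalbound} is invoked with the correct hypotheses. Everything else is a clean telescoping sum.
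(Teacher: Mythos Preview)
Your proposal is correct and is essentially the same argument as the paper's proof, just spelled out more carefully. The paper compresses your hybrid/telescoping step into the single assertion that the $\ell_1$ distance between product distributions is bounded by the sum of coordinate-wise distances, and then applies Lemma~\ref{lem:onedimensionalbound}; your explicit verification that $\bp^{\cS} = \prod_i p_i^{\cI_i}$ and the marginalization identity are exactly the ingredients the paper leaves implicit.
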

\begin{proof}
By triangle inequality, the distance between any two product distributions
is upper bounded by the sum of distances in each coordinate. Hence,
\[
\norm{\bp- \bp^{\cS}}_1 \leq \sum^d_{i=1} \norm{p_i- p_i^{\cI_{i}}}_1 \leq 30d \sqrt{\epsilon},
\]
where the second inequality follows from Lemma~\ref{lem:onedimensionalbound}.
\end{proof}
We now have all the tools to prove the main result on Gaussian mixtures.
\begin{theorem}
\label{thm:learngaussian}
Let $\epsilon_1 = \epsilon^3/kd$ and 
 $n \geq \max \left( \frac{2}{\epsilon^2_1} \log \frac{2d}{\delta},
 \frac{8}{\epsilon^{d}_1 \epsilon^{3}} \log \frac {2}{\delta \epsilon^{d}_1} \right) $. 
 Then given $2n$ samples from an axis-aligned Gaussian mixture $\bfvec$,
with probability $\geq 1 - 2\delta$, \textsf{Learn}
returns an estimate mixture $\hat{\bfvec}$ with at most $\cO(k/\epsilon^3)$ components such that  
\[
\norm{\hat{\bfvec} - \bfvec}_1 \leq 170\epsilon.
\]
The run time of the algorithm is $\cO\left(1/\epsilon_1 \right)^d$.
\end{theorem}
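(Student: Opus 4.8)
The plan is to read Theorem~\ref{thm:learngaussian} as the combination of Lemma~\ref{lem:solutionovers} with the unflattening estimate of Lemma~\ref{lem:ddimensionalbound}. With $\epsilon_1 = \epsilon^3/(kd)$ and the stated lower bound on $n$, Lemma~\ref{lem:solutionovers} already gives, with probability $\ge 1-2\delta$, that the mixture $\hat\bfvec$ output by \textsf{Learn} satisfies $\norm{\bfvec^{\cS} - \hat\bfvec^{\cS}}_1 \le 74\epsilon$; moreover its proof invokes Theorem~\ref{thm:main-approx} on \textsf{solve}$(A^{\cS'}_{\text{finite}}, b, k, 64\epsilon)$, which produces a weight vector $w'$ with $\cO(k/(64\epsilon)^3) = \cO(k/\epsilon^3)$ nonzero entries, so $\hat\bfvec$ is a mixture of $\cO(k/\epsilon^3)$ axis-aligned Gaussians as required. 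Hence the only thing left to prove is the bound $\norm{\bfvec - \hat\bfvec}_1 \le 170\epsilon$, and for that I would use the triangle inequality
\[ \norm{\bfvec - \hat\bfvec}_1 \le \norm{\bfvec - \bfvec^{\cS}}_1 + \norm{\bfvec^{\cS} - \hat\bfvec^{\cS}}_1 + \norm{\hat\bfvec^{\cS} - \hat\bfvec}_1 . \]

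For the third term: every component of $\hat\bfvec$ is a column of $A^{\cS'}_{\text{finite}}$ and hence a $(\cI_1,\ldots,\cI_d), 2\epsilon^2/d$-good axis-aligned Gaussian, so by Lemma~\ref{lem:ddimensionalbound} each component is within $\cO(\epsilon)$ in $\ell_1$ of its own flattening; since flattening is linear and the mixing weights of $\hat\bfvec$ sum to $1$, this gives $\norm{\hat\bfvec^{\cS} - \hat\bfvec}_1 \le \cO(\epsilon)$. For the first term I would split $\bfvec = \bfvec_H + \bfvec_L$ into its heavy components (mixing weight $\ge \epsilon/k$) and its light components. The proof of Lemma~\ref{lem:solutionovers} already shows --- via the Dvoretzky-Kiefer-Wolfowitz bound on the one-dimensional marginals --- that on the same $2\delta$-probability event every heavy component is $2\epsilon^2/d$-good, so Lemma~\ref{lem:ddimensionalbound} gives $\norm{\bfvec_H - \bfvec_H^{\cS}}_1 \le \cO(\epsilon)$ by linearity again; the light components number at most $k$ with total weight at most $\epsilon$, and $\norm{p_r - p_r^{\cS}}_1 \le 2$ for each, so $\norm{\bfvec_L - \bfvec_L^{\cS}}_1 \le 2\epsilon$. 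Adding the three terms and tracking the constants yields the claimed $170\epsilon$.

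For the running time, each iteration of \textsf{solve} is executed through the oracle of Remark~\ref{remark:oracle}: it scans the finitely many candidate columns of $A^{\cS'}_{\text{finite}}$ --- there are $\cO((\epsilon')^{-2d})$ of them, one per ordered pair of samples, with $\epsilon' = \epsilon_1/10$ --- and evaluates the potential ratio on each of the $|\cS'| \le |\cS| \le \epsilon_1^{-d}$ rows, over $T = \cO(k/\epsilon^3)$ iterations; since $k/\epsilon^3 \le 1/\epsilon_1$, this is $\cO(1/\epsilon_1)^{\cO(d)}$, matching the stated bound. The extra error incurred because the candidates are only $\epsilon'$-close (not equal) to the true components is, as noted just before the statement of \textsf{Learn}, absorbed by taking $\epsilon'$ negligibly small, and the overall failure probability is exactly the $2\delta$ of Lemma~\ref{lem:solutionovers}.

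The part needing the most care is the joint bookkeeping of parameters: one must verify that the single choice $\epsilon_1 = \epsilon^3/(kd)$ together with the stated $n$ simultaneously makes (i) $b$ satisfy the hypotheses of Theorem~\ref{thm:main-approx} with slack $\epsilon_0$ small enough relative to the error parameter $64\epsilon$ --- so that the $\cO(k/\epsilon^3)$-sparse near-solution $w^*$ (the heavy components, which have $\norm{w^*}_1 \ge 1-\epsilon$) both exists and is found; (ii) every heavy component of $\bfvec$ is $2\epsilon^2/d$-good, so that Lemma~\ref{lem:ddimensionalbound} applies to $\bfvec_H$; and (iii) the sets of $\cS'$ carry enough empirical mass for the Chernoff estimates of Lemma~\ref{lem:fisclosetob} to hold on the chosen $2\delta$-event. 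Once these consistency checks are in place, the theorem follows from the triangle inequality above together with Lemmas~\ref{lem:solutionovers} and~\ref{lem:ddimensionalbound}.
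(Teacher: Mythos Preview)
Your proposal is correct and follows essentially the same route as the paper: the same three-term triangle inequality through $\bfvec^{\cS}$ and $\hat\bfvec^{\cS}$, with Lemma~\ref{lem:solutionovers} giving the $74\epsilon$ middle term, Lemma~\ref{lem:ddimensionalbound} applied componentwise giving $30\sqrt{2}\,\epsilon$ for the $\hat\bfvec$ side, and the heavy/light split (heavy components are $2\epsilon^2/d$-good via the DKW step inside Lemma~\ref{lem:solutionovers}, light components contribute $\le 2\epsilon$) giving $30\sqrt{2}\,\epsilon + 2\epsilon$ for the $\bfvec$ side, summing to $170\epsilon$. Your additional remarks on the component count and running time are more explicit than the paper's one-line deferrals, but consistent with them.
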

\begin{proof}
By triangle inequality,
\[
\norm{\hat{\bfvec} - \bfvec}_1 \leq
\norm{\hat{\bfvec}^{\cS} - \bfvec^{\cS}}_1 + \norm{\hat{\bfvec}^{\cS} - \hat{\bfvec}}_1 + \norm{\bfvec^{\cS} - \bfvec}_1.
\]
We now bound each of the terms above.
By Lemma~\ref{lem:solutionovers}, the first term is $\leq 74\epsilon$.
By triangle inequality for $\hat{\bfvec} = \sum^{k'}_{r=1} \hat{\wi} \hat{\bp}_r$,
\[
\norm{\hat{\bfvec}^{\cS} - \hat{\bfvec}}_1
\leq \sum^{k'}_{r =1} \hat{\wi} \norm{\hat{\bp}^{\cS}_r - \hat{\bp}_r}_1 \leq 30\sqrt{2} \epsilon,
\]
where the last inequality follows from the fact that the allowed distributions in $A^{\cS'}
$ are $(\cI_1,\cI_2,\ldots \cI_d),2\epsilon^2/d$-good and by 
Lemma~\ref{lem:ddimensionalbound}.
By triangle inequality,
\begin{align*}
\norm{\bfvec^{\cS} - \bfvec}_1
& \leq \sum^{k}_{r =1} \wi \norm{\bp^{\cS}_r - \bp_r}_1  \\
& \leq \!  \!\sum_{r: \wi \geq \epsilon/k} \wi \norm{\bp^{\cS}_r - \bp_r}_1 \!  \!
+ \!  \! \sum_{r: \wi < \epsilon/k} \wi \norm{\bp^{\cS}_r - \bp_r}_1 \\
& \leq \sum_{r: \wi \geq \epsilon/k} \wi \norm{\bp^{\cS}_r - \bp_r}_1 
+ 2\epsilon \\
& \leq 30\sqrt{2}\epsilon + 2\epsilon.
\end{align*}
where the last inequality follows from the proof of Lemma~\ref{lem:solutionovers}, where we showed that heavy components are $(\cI_1,\cI_2,\ldots \cI_d),2\epsilon^2/d$-good and by Lemma~\ref{lem:ddimensionalbound}.
Summing over the terms corresponding to $\norm{\hat{\bfvec}^{\cS} - \bfvec^{\cS}}_1$, $\norm{\hat{\bfvec}^{\cS} - \hat{\bfvec}}_1$, and  $\norm{\bfvec^{\cS} - \bfvec}_1$, we get the total error as 
$74\epsilon+ 30\sqrt{2} \epsilon + 30 \sqrt{2}\epsilon+2\epsilon \leq 170\epsilon$.
The error probability and the number of samples necessary are same as that of Lemma~\ref{lem:solutionovers}.
The run time follows from the comments in Section~\ref{sec:algorithm} and the bound on number of samples.
\end{proof}

If we consider the leading term in sample complexity, for $d =1$ our bound is $\tcO(k^2/\epsilon^6)$, and for $d > 1$, our bound is  $\tcO((kd)^{d}/\epsilon^{3d+3})$. While this is not the optimal sample complexity (see~\cite{AcharyaJOS14}),  we gain significantly in the running time.


\section{Lower bounds} \label{sec:lower-bounds}
We now investigate lower bounds towards obtaining sparse approximate solutions to nonnegative systems. 
Our first result is that unless $\mathcal{P} = \mathcal{NP}$, we need to lose a factor at least $\log(1/\eps)$ in the sparsity to be $\eps$-close in the $\ell_1$ norm. 
Formally, 

\begin{theorem}\label{Theorem:max_k_cover}
For any $\epsilon > 0$, given an instance of the sparse recovery problem $A, b$ that is promised to have a $k$-sparse nonnegative solution, it is $\mathcal{NP}$-hard to obtain an $o\left(k\ln\left(\frac{1}{\epsilon}\right)\right)$-sparse solution $\xalg$ with $\norm{A\xalg - b}_1 < \eps \norm{b}_1$.
\end{theorem}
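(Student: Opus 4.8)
The plan is to reduce from the hardness of approximating \textsc{Set Cover}, using the classical result of Feige that it is $\mathcal{NP}$-hard to approximate the minimum set cover to within a factor $(1-o(1))\ln m$ on a universe of size $m$. Given a set cover instance with sets $S_1,\dots,S_n \subseteq [m]$ whose optimum is $k$, I would build a nonnegative linear system as follows. Let the columns $A_i$ be (normalized) indicator vectors of the sets $S_i$, and let $b$ be the uniform vector on $[m]$ normalized so that $\norm{b}_1 = 1$. The key observation from the outline in Section~\ref{sec:algo} is that a nonnegative $x$ with $Ax = b$ (equality) essentially corresponds to a fractional \emph{exact} cover; but since the true optimum is a cover (not necessarily exact), I would instead pad the instance so that an exact $k$-sparse nonnegative solution genuinely exists. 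A clean way to do this is to add, for each element $j \in [m]$, a dedicated ``singleton'' column $e_j$ (scaled appropriately); the $k$ original set-columns selected with equal weight plus a small correction using singletons yields an exact solution --- but one must be careful that this correction does not itself blow up sparsity. An alternative, and I think cleaner, route is to use the \emph{maximum coverage} viewpoint: rescale so that $b$ is uniform, and note that any $\ell_1$-close solution $\xalg$ with support $T$ must have $\sum_{i \in T} A_i$ covering all but an $\eps$-fraction of the mass of $b$, hence the sets indexed by $T$ cover all but $O(\eps m)$ elements of the universe.

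The heart of the argument is the quantitative tradeoff. Suppose, for contradiction, that we had a polynomial-time algorithm producing an $\eps$-approximate ($\ell_1$) solution of sparsity $s = o(k \ln(1/\eps))$. Then the support $T$ of that solution, viewed as a sub-collection of sets, covers a $(1-O(\eps))$-fraction of $[m]$ using only $s$ sets. Now I invoke the standard fact about \textsc{Set Cover}: if the optimum (full) cover has size $k$, then any sub-collection of $t$ sets covers at most a $\bigl(1 - (1 - 1/k)^t\bigr) \le 1 - e^{-t/k}$ fraction --- wait, that is the \emph{lower} bound achievable by greedy, not an upper bound for arbitrary collections. So instead the right statement to use is the hardness side: I want to say that covering a $(1-\eps)$-fraction with $o(k\ln(1/\eps))$ sets would let me then finish the cover cheaply. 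Concretely, repeatedly apply such an algorithm (or the greedy covering bound) in phases: each phase reduces the number of uncovered elements by a constant factor while spending $O(k)$ sets, so after $O(\ln(1/\eps))$ phases we have covered a $(1-\eps)$ fraction with $O(k\ln(1/\eps))$ sets --- and pushing $\eps$ down to $1/(m+1)$ gives a full cover of size $O(k \ln m)$, contradicting Feige's $(1-o(1))\ln m$ inapproximability if the hidden constant is below $1$. The delicate point is matching constants: I need the reduction to show hardness of $o(k\ln(1/\eps))$-sparsity \emph{for a fixed $\eps$ in the instance}, so I would set $\eps$ as a function of $m$ (say $\eps = 1/\mathrm{poly}(m)$ or even a constant, depending on which regime of Feige's theorem I invoke) and track how an $o(k\ln(1/\eps))$-sparse $\ell_1$-solver translates into a $o(\ln m)$-approximate set cover.

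The main obstacle I anticipate is precisely this constant/parameter bookkeeping: ensuring that ``$o(k\ln(1/\eps))$ sparsity for the linear system'' cleanly implies ``better than $\ln m$ approximation for set cover,'' rather than merely ``better than $c\ln m$ for some constant $c>1$'' (which would only need the older \citet{Arora93}/\citet{Kann93}-style bounds or \citet{Feige98}). A secondary subtlety is handling the \emph{fractional} and \emph{equality} aspects: because the linear system demands $\norm{Ax-b}_1 < \eps\norm{b}_1$ rather than coverage, I must argue that fractional weights do not help --- i.e., from any fractional $\eps$-approximate solution of support size $s$ one recovers an integral sub-collection of the same size $s$ covering a $(1-O(\eps))$-fraction, which is immediate since the support of $x$ as a set cover is what matters for coverage. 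I would also need the promise that a $k$-sparse \emph{exact} nonnegative solution exists; the padding-with-singletons trick handles this, but I must double-check it does not inadvertently give the algorithm an $o(k\ln(1/\eps))$-sparse solution by using singletons --- this is controlled by normalizing so singletons carry negligible $\ell_1$ weight individually, so that using $o(m)$ of them still leaves $\Omega(\eps)$ mass uncovered unless the genuine set-columns do the work.
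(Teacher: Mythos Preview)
Your overall shape is right: encode sets as columns, take $b$ to be the all-ones (or uniform) vector, and observe that the \emph{support} of any nonnegative $\eps$-approximate solution must, as a sub-collection of sets, cover at least a $(1-\eps)$-fraction of the universe. That last observation is exactly what the paper uses. Where your proposal diverges from the paper, and where it runs into real trouble, is in how you arrange for a $k$-sparse \emph{exact} nonnegative solution to exist.

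The paper does not reduce from generic \textsc{Set Cover} and then patch things up; it reduces from the gap version of \textsc{Max $k$-Cover} (Feige's instances), whose YES case has $k$ \emph{pairwise disjoint} sets partitioning $[m]$. With disjointness, setting $x^*_i = 1$ on those $k$ columns gives $Ax^* = \ones$ on the nose, so the promise is satisfied with no padding at all. In the NO case the hardness statement already says that any $\ell \le ck$ sets cover at most $\bigl(1-(1-1/k)^\ell + \delta\bigr)m$ elements; combining this with ``support of an $\eps$-approximate solution covers $(1-\eps)m$ elements'' and solving $(1-1/k)^\ell \approx \eps$ gives $\ell = \Omega(k\ln(1/\eps))$ directly. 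No iterative phases, no parameter gymnastics with $\eps = 1/\mathrm{poly}(m)$.

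Your singleton-padding idea, by contrast, does not work as stated. If the optimal $k$ sets overlap, then $\sum_{i\in \mathrm{OPT}} \mathbf{1}_{S_i}$ has some coordinates strictly larger than $1$. Any nonnegative combination of those $k$ columns can only leave such coordinates $\ge$ their minimum positive contribution, and adding singleton columns with \emph{nonnegative} weight can only push coordinates further up, never down toward $1$. So there is no $k$-sparse (or even $(k+m)$-sparse) nonnegative \emph{exact} solution in general, and the promise in the theorem fails. The fix is not to normalize the singletons differently; it is to start from hard instances whose YES case is already a partition. Once you know that Feige's construction has this feature, your argument collapses to the paper's one-paragraph proof, and the ``phases'' discussion and the constant-matching worries both disappear.
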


Our second result gives a connection to a random planted
version of the set cover problem, which is beyond all
known algorithmic techniques. We
prove that unless this planted set cover problem can
be solved efficiently, we cannot hope to obtain an $\epsilon$-
approximate solution with sparsity $o(k/\epsilon^2)$.

Theorem~\ref{Theorem:max_k_cover} is 
inspired by the hard instances of Max $k$-Cover problem
~\citep{Fei98,FT04,FV10}. 

\paragraph{Hard Instances of Max $k$-Cover.} 
For any $c > 0$, and $\delta > 0$, given a collection of $n$ sets $S_1,
S_2, \dots, S_n \subseteq [m]$, it is $\mathcal{NP}$-Hard to distinguish  
between the following two cases: 

\begin{itemize}
\item 
{\sc Yes} case: There are $k$ disjoint sets in this collection whose union is $[m]$. 
\item 
{\sc No} case: The union of any $\ell \leq ck$ sets of this collection has size at most $(1-(1-\frac{1}{k})^{\ell} + \delta)n$. 
\end{itemize}
\begin{proof}[Proof outline, Theorem~\ref{Theorem:max_k_cover}]
We reduce hard instance of the Max $k$-cover problem to our problem as follows. For each set $S_i$, we set $A_i$ (column $i$ in $A$) to be the indicator vector of set $S_i$. We also let $b$ to be the vector with all entries equal to one. 

In the {\sc Yes} case, we know there are $k$ disjoint sets whose union is the universe, and we construct solution $x^*$ as follows. We set $x^*_i$ (the $i$th entry of $x^*$) to one if set $S_i$ is one of these $k$ sets, and zero otherwise. It is clear that $Ax^*$ is equal to $b$, and therefore there exists a $k$-sparse solution in the {\sc Yes} case. 

On the other hand, for every $\epsilon$-approximate non-negative solution $\hat{x}$, we know that the number of non-zero entries of $A\hat{x}$ is at most $\epsilon m$ by definition. Define $C$ to be the sub-collection of sets with non-zero entry in $\hat{x}$, i.e. $\{S_i ~|~ \hat{x}_i > 0\}$. We know that each non-zero entry in $A\hat{x}$ is covered by some set in sub-collection $C$. In other words, the union of sets in $C$ has size at least $(1-\epsilon)m$. We imply that the number of sets in collection $C$ should be at least $\Omega(k\ln(\frac{1}{\epsilon+\delta}))$ since $(1-\frac{1}{k})^k$ is in range $[\frac{1}{4}, \frac{1}{e}]$. We can choose $\delta$ to be $\epsilon$, and therefore the sparsest solution that one can find in the {\sc No} case is $\Omega(k\ln(\frac{1}{\epsilon}))$-sparse. 
Assuming $\mathcal{P} \neq \mathcal{NP}$, it is not possible to find a $o(k\ln{\frac{1}{\epsilon}})$-sparse $\epsilon$-approximate solution when there exists a $k$-sparse solution, otherwise it becomes possible to distinguish between the {\sc Yes} and {\sc No} cases of the Max $k$-Cover problem in polynomial time. 
\end{proof}

Finally, we show that unless a certain variant of set cover can be solved efficiently, we cannot hope to obtain an $\eps$-approximate solution with sparsity $o(k/\eps^2)$. We will call this the {\em planted set cover} problem:

\begin{defn}{(Planted set cover $(k,m)$ problem)}
Given parameters $m$ and $k > m^{3/4}$, find an algorithm that distinguishes with probability $> 2/3$ between the following distributions over set systems over $m$ elements and $n = O(m/ \log m)$ sets:

{\sc No} case: The set system is random, with element $i$ in set $j$ with probability $1/k$ (independently).

{\sc Yes} case: We take a random set system with $n-k$ sets as above, and add a random $k$-partition of the elements as the remaining $k$ sets. (Thus there is a perfect cover using $k$ sets.)
\end{defn}

To the best of our knowledge, none of the algorithmic techniques developed in the context of set cover can solve this distinguishing problem. The situation is similar in spirit to the planted clique and planted dense subgraph problems on random graphs, as well as random $3$-SAT
~\citep{PlantedClique, PlantedDks, RandomSat}. This shows that obtaining sparse approximate solutions with sparsity $o(k/\eps^2)$ requires significantly new techniques. Formally, we show the following
\begin{theorem}\label{thm:planted-set-cover}
Let $m^{3/4} < k < m/\log^2 m$. Any algorithm that finds an $o(k/\eps^2)$ sparse $\eps$-approximate solution to non-negative linear systems can solve the planted set cover $(k,m)$ problem.
\end{theorem}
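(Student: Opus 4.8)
The plan is to reduce the planted set cover $(k,m)$ problem to sparse recovery by running a hypothetical $o(k/\eps^2)$-sparse $\eps$-approximate solver on an instance built from the set system, and then distinguishing the \textsc{Yes} and \textsc{No} cases based on whether such a sparse solution can exist. Concretely, given a set system on $m$ elements and $n = O(m/\log m)$ sets, I would form the matrix $A$ whose columns are the (normalized) indicator vectors of the sets, and take $b = \ones/m$. In the \textsc{Yes} case there is a $k$-partition of the elements, so $x^* = (1/k)\sum_{j \in \text{partition}} e_j$ gives $Ax^* = b$ exactly, i.e.\ a genuine $k$-sparse nonnegative solution; hence the solver returns some $\xalg$ with at most $o(k/\eps^2)$ nonzero coordinates and $\norm{A\xalg - b}_1 \le \eps$. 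The distinguisher outputs ``\textsc{Yes}'' iff the solver succeeds in producing such a sparse, low-error vector.

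The crux is the soundness direction: I must show that in the \textsc{No} case (a purely random set system with inclusion probability $1/k$), \emph{no} nonnegative vector with fewer than, say, $c k/\eps^2$ nonzero coordinates can achieve $\ell_1$ error $\le \eps$ against $b = \ones/m$. The first step here is a support-reduction argument: a solution supported on $s$ random sets covers, in expectation, roughly $m(1 - (1-1/k)^s) \approx ms/k$ elements when $s \ll k$, so with $s = o(k/\eps^2)$ the uncovered elements number $\approx m(1 - s/k)$, which is far more than $\eps m$ — but covering alone is not the obstruction, since fractional weights can redistribute mass. The real point is an \emph{anticoncentration / discrepancy} statement: for a random $0/1$ matrix with column density $1/k$, and any fixed small support of size $s$, the vector $A x$ restricted to that support is spread so that $\norm{Ax - b}_1$ is bounded below by a constant times the ``excess'' mass, and a union bound over all $\binom{n}{s}$ supports (which is only $2^{O(s \log n)} = 2^{o(k \log m / \eps^2)}$, subexponential in $m$ when $k \le m/\log^2 m$) shows the bad event fails with high probability. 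One should track that $\norm{A_j}_1 \approx m/k$ for each column by Chernoff, so after normalization the relevant scaling is clean.

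The step I expect to be the main obstacle is exactly this lower bound on $\norm{Ax-b}_1$ over \emph{all} nonnegative $x$ with small support in the random instance — because $x$ ranges over a continuum, a naive union bound over supports is not enough; I need, for each fixed support $S$ of size $s$, a uniform (over the infinitely many weight vectors on $S$) lower bound on the $\ell_1$ error. The natural route is: fix $S$, note $\norm{b}_1 = 1$ while $\sum_{j \in S} \norm{A_j}_1 / m = s \cdot \Theta(1/k)$, so if $s = o(k/\eps^2)$ then even putting all weight to match $\norm{b}_1$ forces the entries $(Ax)_i$ for $i$ in the union of the $S$-sets to average $\Theta(1/(s/k \cdot m)) = \Theta(k/(sm))$, much larger than $b_i = 1/m$, and meanwhile a $(1 - o(\eps^2))$-fraction of elements lie outside $\bigcup_{j\in S} S_j$ with value $0$ versus $b_i = 1/m$; summing the deficit on the uncovered part already gives $\norm{Ax - b}_1 \ge 1 - o(\eps^2) - (\text{mass on covered part})$, and the mass on the covered part is at most $1$, so I instead bound the overshoot on the covered part from below — this is where the randomness of which elements land in which set, hence a bounded-difference / martingale concentration over the covered coordinates, is invoked. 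I would present this as: (i) w.h.p.\ every size-$s$ union has size $\le (1 - \Omega(1)) m$ when $s = o(k)$; (ii) on the covered coordinates, a random matrix argument shows $x$ cannot make $Ax$ too close to uniform-$1/m$ without total variation cost $\Omega(1)$, contradicting error $\le \eps$; combined with the parameter regime $k > m^{3/4}$ (needed so that $k/\eps^2$-sparse supports still satisfy $s\log n = o(m)$ for the union bound), this completes the reduction.
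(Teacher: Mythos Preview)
Your reduction skeleton is right---form $A$ from set indicators, observe the \textsc{Yes} case has a $k$-sparse exact solution, and argue the \textsc{No} case admits no $o(k/\eps^2)$-sparse $\eps$-approximate solution---but the concrete \textsc{No}-case argument you sketch does not reach the target bound. Your overshoot/undershoot calculation is valid only when the support size $s$ is $o(k)$: in that regime the union of $s$ random sets covers about $sm/k$ elements, the average of $Ax$ on covered elements is $\Theta(k/(sm))$, and the uncovered deficit is $\Theta(1-s/k)$. But the relevant sparsity is $s=\Theta(k/\eps^2)$, which is \emph{larger} than $k$; there the union covers all but an $e^{-s/k}=e^{-\Theta(1/\eps^2)}$ fraction of elements, the average of $Ax$ on covered elements is already $\Theta(1/m)$, and your argument yields no nontrivial lower bound. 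In short, a covering-style analysis can only show a lower bound of order $k$, not $k/\eps^2$.

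The mechanism the paper uses is a per-row \emph{variance} bound, which is exactly what produces the extra $1/\eps^2$. With $b=\ones$ (so $\norm{x}_1\approx k$), writing $y$ for a row of $A$ one has $\mathrm{Var}(\iprod{y,x})\approx \norm{x}_2^2/k$; for an $s$-sparse $x$ with $\norm{x}_1\approx k$, Cauchy--Schwarz gives $\norm{x}_2^2\ge k^2/s$, so the standard deviation of each entry of $Ax-\ones$ is at least $\sqrt{k/s}$, which exceeds $\eps$ precisely when $s<k/\eps^2$. The paper turns this into a rigorous lower bound by (i) using Paley--Zygmund (second and fourth moments of $\sum_i W_i x_i$, with $W_i$ the centered Bernoulli$(1/k)$) to show each row satisfies $|\iprod{y,x}-1|\ge \Omega(\eps)$ with constant probability, (ii) concentrating over the $m$ independent rows to get $\norm{Ax-\ones}_1\ge \Omega(\eps m)$ except with probability $e^{-\Omega(m)}$, and (iii) union-bounding over an $\eps'$-net of $Ck$-sparse vectors of size $m^{O(Ck)}$. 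The continuum issue you flag is handled by the net, not by a uniform-over-weights argument on each support. Finally, your attribution of the role of $k>m^{3/4}$ is inverted: it is the \emph{upper} bound $k<m/\log^2 m$ that makes $Ck\log m=o(m)$ so the union bound over the net goes through; the lower bound $k>m^{3/4}$ is there to make the planted instance plausibly hard, not to help the analysis.
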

\begin{proof}
Let $n$ (which is $O(m/ \log m)$) be the number of sets in the set system. Let $A$ be the $m \times n$ matrix whose $i,j$th entry is $1$ if element $i$ is in set $j$, and $0$ otherwise. 
It is clear that in the {\sc Yes} case, there exists a solution to $Ax = \ones$ of sparsity $k$. It suffices to show that in the {\sc No} case, there is no $\eps$-approximate solution to $Ax = \ones$ with fewer than $\Omega(k/\eps^2)$ entries.

Let us define $C = 1/\eps^2$, for convenience.  The proof follows the standard template in random matrix theory (e.g.~\cite{Rudelson}): we show that for any {\em fixed} $Ck$-sparse vector $x$, the probability that $\norm{Ax - \ones}_1 < 1/(4\sqrt{C})$ is tiny, and then take a union bound over all $x$ in a fine enough grid to conclude the claim for all $k$-sparse $x$.

Thus let us fix some $Ck$ sparse vector $x$ and consider the quantity $\norm{Ax-\ones}_1$. Let us then consider one row, which we denote by $y$, and consider $|\iprod{y, x} -1|$.  Now each element of $y$ is $1$ with probability $1/k$ and $0$ otherwise (by the way the set system was constructed). Let us define the mean-zero random variable $W_i$, $1\le i \le n$, as follows:
\[ W_i = \begin{cases} 1- 1/k \qquad \text{with probability $1/k$, } \\ -1/k \qquad \text{otherwise.}  \end{cases} \]
We first note that $\E[ |\iprod{y,x}-1|^2] \ge \E[ (\sum_{i} W_i x_i )^2 ]$. This follows simply from the fact that for any random variable $Z$, we have $\E[ |Z-1|^2 ] \ge \E[ |Z-\E[Z]|^2 ]$ (i.e., the best way to ``center'' a distribution with respect to a least squares objective is at its mean). Thus let us consider
\[ \E \left[ \left( \sum_i W_i x_i \right)^2 \right] = \sum_i x_i^2 \cdot \E[W_i^2] = \sum_i x_i^2 \cdot \frac{1}{k} \big( 1- \frac{1}{k} \big). \]

Since $x$ is $Ck$-sparse, and since $\norm{x}_1 \ge 3k/4$, we have $\sum_i x_i^2 \ge \frac{1}{Ck} \cdot \norm{x}_1^2 \ge k/2C$. Plugging this above and combining with our earlier observation, we obtain
\begin{equation}\label{eq:each-coordinate}
\E[ |\iprod{y, x}-1|^2] \ge \E \left[ \left( \sum_i W_i x_i \right)^2 \right] \ge \frac{1}{3C}. 
\end{equation}
Now we will use the Paley-Zygmund inequality,\footnote{For any non-negative random variable $Z$, we have $\Pr( Z \ge \theta \E[Z]) \ge (1-\theta)^2 \cdot \frac{\E[Z]^2}{\E[Z^2]}$.} with the random variable $Z := |\iprod{y, x}-1|^2$. For this we need to upper bound $\E[Z^2] = \E[ |\iprod{y, x} -1 |^4]$. We claim that we can bound it by a constant. Now since $\norm{x}_1$ is between $3k/4$ and $5k/4$, we have $|\iprod{y, x} - \sum_i W_i x_i| < 1/2$. This in turn implies that $\E[Z^2] \le 4( \E[(\sum_i W_i x_i )^4] + 4)$. We will show that $\E[ (\sum_i W_i x_i )^4 ]= O(1)$.
\begin{align*}
\E[ (\sum_i W_i x_i)^4 ] &= \sum_i W_i^4 x_i^4 + 3\sum_{i,j} W_i^2 W_j^2 x_i^2 x_j^2 \\
&\le \frac{1}{k} \cdot \sum_i x_i^4 + \frac{3}{k^2} \sum_{i,j} x_i^2 x_j^2 \\
& \le 1 + \frac{3}{k^2} \cdot (\sum_i x_i^2 )^2 = O(1).
\end{align*}
Here we used the fact that we have $0\le x_i \le 1$ for all $i$, and that $\sum_i x_i^2 \le \sum_i x_i \le 5k/4$.

This implies, by using the Paley-Zygmund inequality, that
\begin{equation}
\Pr \left[ |\iprod{y, x}-1| < \frac{1}{4\sqrt{C}} \right] < 1-1/10.
\end{equation}

Thus if we now look at the $m$ rows of $A$, and consider the number of them that satisfy $|\iprod{y, x}-1| < 1/(4 \sqrt{C})$, the expected number is $< 9m/10$, thus the probability that there are more than $19m/20$ such rows is $\exp(-\Omega(m))$. Thus we have that for any $Ck$-sparse $x$ with $\norm{x}_1 \in [3k/4, 5k/4]$ and $\norm{x}_\infty \le 1$, 
\begin{equation}\label{eq:single-x}
\Pr \left[ \norm{Ax - \ones}_1 < \frac{1}{80\sqrt{C}} \right] < e^{-m/40}.
\end{equation}

Now let us construct an $\eps'$-net\footnote{An $\eps$-net for a set $S$ of points is a subset $T$ such that for any $s \in S$, there is a $t \in T$ such that $\norm{s-t}_2 < \eps$.} for the set of all $Ck$-sparse vectors, with $\eps' = 1/m^2$. A simple way to do it is to first pick the non-zero coordinates, and take all integer multiples of $\eps'/m$ as the coordinates. It is easy to see that this set of points (call it $\calN$) is an $\eps'$ net, and furthermore, it has size roughly
\[ \binom{m}{Ck} \left( \frac{m}{\eps'}  \right)^{Ck} = O\left( m^{4Ck} \right). \]

Thus as long as $m > 200 Ck \log m$, we can take a union bound over all the vectors in the $\eps'$ net, to conclude that with probability $e^{-\Omega(m)}$, we have
\[ \norm{Ax - \ones}_1 > \frac{1}{80 \sqrt{C}} \qquad \text{for all $x \in \calN$.} \]
In the event that this happens, we can use the fact that $\calN$ is an $\eps'$ net (with $\eps' = 1/m^2$), to conclude that $\norm{Ax-\ones}_1 > \frac{1}{100\sqrt{C}}$ for {\em all} $Ck$-sparse vectors with coordinates in $[0,1]$ and $\norm{x}_1 \in [3k/4, 5k/4]$.

This completes the proof of the theorem, since $\frac{1}{100\sqrt{C}}$ is $\Omega(\eps)$, and $k < m/\log^2 m$.
\end{proof}


\bibliographystyle{abbrvnat}
\bibliography{lins}

\newpage 
\appendix
\section{Auxiliary lemmas}
The simple technical lemma we required in the proof is the following.
\begin{lemma}\label{lem:subadditive}
Let $\delta >0$, and $f(t) := (1+\delta)^t-1$. Then for any $t_1, t_2
\ge 0$, we have
\[ f(t_1) + f(t_2) \le f(t_1+t_2). \]
\end{lemma}
\begin{proof}
The proof follows immediately upon expansion:
\[ f(t_1+ t_2) - f(t_1) - f(t_2) = \big( (1+\delta)^{t_1} - 1 \big) \big( (1+\delta)^{t_2} -1 \big).\]
The term above is non-negative because $\delta, t_1, t_2$ are all $\ge
0$.
\end{proof}

\begin{lemma}[Averaging] \label{lem:averaging}
Let $\{a_i, b_i\}_{i=1}^k$ be non-negative real numbers, such that
\[ \sum_i a_i = A \qquad \text{and} \qquad \sum_i b_i =1. \]
Then for any parameter $C>1$, there exists an index $i$ such that $b_i \ge 1/(Ck)$, and $a_i \le b_i \cdot A/(1-1/C)$.
\end{lemma}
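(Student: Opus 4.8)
\textbf{Proof proposal for Lemma~\ref{lem:averaging}.}
The plan is a one-line averaging/pigeonhole argument. Define the set of ``heavy'' indices
\[ S := \{ i \in [k] ~:~ b_i \ge 1/(Ck) \}. \]
First I would dispose of the light indices: since there are at most $k$ of them and each contributes less than $1/(Ck)$ to the sum $\sum_i b_i = 1$, we get $\sum_{i \notin S} b_i < k \cdot \frac{1}{Ck} = \frac1C$, and hence
\[ \sum_{i \in S} b_i > 1 - \frac1C. \]
In particular $S \neq \emptyset$ (as $1 - 1/C > 0$), so the constraint ``$b_i \ge 1/(Ck)$'' can be met.

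Next I would extract the index with the smallest ratio. Since $b_i > 0$ for every $i \in S$, choose $i^\star \in S$ minimizing $a_{i^\star}/b_{i^\star}$. By the elementary mediant inequality (the minimum of the ratios $a_i/b_i$ over $i \in S$ is at most the ratio of the sums), together with $\sum_{i \in S} a_i \le \sum_i a_i = A$ and the lower bound on $\sum_{i \in S} b_i$ above,
\[ \frac{a_{i^\star}}{b_{i^\star}} \;\le\; \frac{\sum_{i \in S} a_i}{\sum_{i \in S} b_i} \;\le\; \frac{A}{1 - 1/C}, \]
which is exactly $a_{i^\star} \le b_{i^\star} \cdot A/(1-1/C)$; combined with $b_{i^\star} \ge 1/(Ck)$ this is the claim. (Equivalently, one can argue by contradiction: if $a_i > b_i \cdot A/(1-1/C)$ held for \emph{all} $i \in S$, summing over $S$ would give $\sum_{i \in S} a_i > \frac{A}{1-1/C}\sum_{i\in S} b_i > A$, contradicting $\sum_{i\in S} a_i \le A$.) The degenerate case $A = 0$ forces all $a_i = 0$, so any $i \in S$ works.

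There is no real obstacle here: the statement is a routine weighted-averaging fact, and the only points that need a moment's care are (i) confirming that the light indices carry mass strictly less than $1/C$ so that $S$ is nonempty, and (ii) using the mediant/contradiction step rather than a naive ``average'' that would ignore the restriction to $S$. I would keep the write-up to the few lines above.
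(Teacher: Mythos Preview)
Your proposal is correct and follows essentially the same route as the paper: define the heavy set $S=\{i:b_i\ge 1/(Ck)\}$, bound $\sum_{i\in S}b_i>1-1/C$ and $\sum_{i\in S}a_i\le A$, and then extract a good index via the ratio-of-sums (the paper phrases this last step as the contradiction you also sketch). If anything, your write-up is slightly more careful about the edge cases ($S\neq\emptyset$, $A=0$) than the paper's.
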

\begin{proof}
Let $S := \{ i~:~ b_i \ge 1/(Ck) \}$. Now since there are only $k$ indices, we have $\sum_{i \in [k]\setminus S} b_i < k \cdot 1/(Ck) < 1/C$, and thus
\begin{equation}\label{eq:avging-1}
\sum_{i \in S} b_i > (1-1/C).
\end{equation}
Next, since all the $a_i$ are non-negative, we get that 
\[ \sum_{i \in S} a_i \le A. \]
Combining the two, we have 
\[ \frac{ \sum_{i \in S} a_i }{\sum_{i \in S} b_i } < \frac{A}{1-1/C}. \]
Thus there exists an index $i \in S$ such that $a_i < b_i \cdot A/ (1-1/C)$ (because otherwise, we have $a_i \ge b_i A/(1-1/C)$ for all $i$, thus summing over $i \in S$, we get a contradiction to the above). This proves the lemma.
\end{proof}

\begin{lemma}\label{lem:basic-ineq}
For any $0<x<1$ and $\delta>0$, we have
\[ (1+\delta)^x \le 1+\delta x \le (1+\delta)^{x(1+\delta)}. \]
\end{lemma}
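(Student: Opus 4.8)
The plan is to prove the two inequalities separately, each by a short convexity or monotonicity argument.

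For the left inequality $(1+\delta)^x \le 1 + \delta x$, I would use convexity of the map $x \mapsto (1+\delta)^x$ on $[0,1]$. This function takes the value $1$ at $x=0$ and $1+\delta$ at $x=1$, and the affine chord joining these two endpoints is precisely $1 + \delta x$; convexity then yields $(1+\delta)^x \le 1 + \delta x$ for all $x \in [0,1]$, which is exactly where the hypothesis $x < 1$ enters. Equivalently, one may simply invoke Bernoulli's inequality in the form $(1+u)^r \le 1 + r u$ valid for $0 \le r \le 1$ and $u \ge -1$, applied with $u = \delta$ and $r = x$.

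For the right inequality, I would take logarithms and reduce to showing $\ln(1+\delta x) \le x(1+\delta)\ln(1+\delta)$. I would chain two bounds: first the standard estimate $\ln(1+\delta x) \le \delta x$ (from $\ln(1+t) \le t$), and second $\delta x \le x(1+\delta)\ln(1+\delta)$. Since $x > 0$, the latter is equivalent to the $x$-free claim $\delta \le (1+\delta)\ln(1+\delta)$, which I would verify by setting $g(\delta) = (1+\delta)\ln(1+\delta) - \delta$, observing $g(0)=0$ and $g'(\delta) = \ln(1+\delta) > 0$ for $\delta>0$, so $g \ge 0$ on $[0,\infty)$. Exponentiating the chained inequality then gives $1 + \delta x \le (1+\delta)^{x(1+\delta)}$.

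There is no genuine obstacle here; the only point needing a (one-line) computation rather than a cited fact is the auxiliary inequality $\delta \le (1+\delta)\ln(1+\delta)$, dispatched by the derivative computation above. I would also remark that the right inequality in fact holds for every $x>0$, and only the left one uses $x \le 1$.
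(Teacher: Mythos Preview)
Your proposal is correct; both inequalities are established cleanly and without gaps.

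The paper takes a different, unified route: for $0<\theta<\delta$ it bounds
\[
\frac{1}{1+\theta} \le \frac{1}{1+\theta x} \le \frac{1+\delta}{1+\theta},
\]
the first inequality from $x<1$ and the second because the right side exceeds $1$ while the left side is below $1$. Integrating in $\theta$ from $0$ to $\delta$ yields $\log(1+\delta) \le \tfrac{1}{x}\log(1+\delta x) \le (1+\delta)\log(1+\delta)$, and multiplying by $x$ and exponentiating gives both inequalities at once. Your argument is more modular: Bernoulli/convexity for the left inequality, and a chain $\ln(1+\delta x)\le \delta x \le x(1+\delta)\ln(1+\delta)$ with a one-line derivative check of $\delta \le (1+\delta)\ln(1+\delta)$ for the right. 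The paper's integral trick is slick in handling both sides simultaneously; your version avoids integration entirely and, as you note, makes explicit that only the left inequality requires $x\le 1$. Either proof is perfectly adequate for this auxiliary lemma.
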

\begin{proof}
For any $0 < \theta < \delta$, we have
\[ \frac{1}{1+\theta} < \frac{1}{1+\theta x} < \frac{1+\delta}{1+\theta}. \]
The first inequality is because $x<1$, and the second is because the RHS is bigger than $1$ while the LHS is smaller. Now integrating from $\theta =0$ to $\theta = \delta$, we get
\[ \log (1+\delta) < \frac{ \log(1+x \delta)}{x} < (1+\delta) \log(1+\delta). \]
Multiplying out by $x$ and exponentiating gives the desired claim.
\end{proof}

\end{document}